\documentclass{article}
\usepackage{microtype}
\usepackage{graphicx}
\usepackage{subcaption}
\usepackage{booktabs} 
\usepackage{hyperref}
\usepackage{booktabs}
\usepackage{multirow}
\usepackage[table]{xcolor}

\usepackage{tabularx}
\usepackage{array}
\usepackage{makecell}
\usepackage{seqsplit}

\newcolumntype{Y}{>{\raggedright\arraybackslash}X}
\newcommand{\code}[1]{\texttt{\footnotesize\seqsplit{#1}}}
\usepackage[accepted]{icml2026}
\usepackage{amsmath}
\usepackage{amssymb}
\usepackage{mathtools}
\usepackage{amsthm}
\usepackage[capitalize,noabbrev]{cleveref}
\theoremstyle{plain}
\newtheorem{theorem}{Theorem}[section]
\newtheorem{proposition}[theorem]{Proposition}
\newtheorem{lemma}[theorem]{Lemma}

\theoremstyle{definition}

\theoremstyle{remark}

\usepackage[textsize=tiny]{todonotes}
\icmltitlerunning{Echoes within the Reasoning:  Stealthy and Effective Watermarking via Chain of Thought}

\begin{document}

\twocolumn[
\icmltitle{
  Echoes within the Reasoning: Stealthy and Effective Watermarking\\
  via Chain of Thought
}

  \icmlsetsymbol{cor}{\dag}

\begin{icmlauthorlist}
    \icmlauthor{Jiacheng Lu}{scsjtu}
    \icmlauthor{Yiming Li}{NTU}
    \icmlauthor{Tao Song}{scsjtu}
    \icmlauthor{Weijian Wang}{scsjtu}
    \icmlauthor{Wenjie Qu}{NUS}
    \icmlauthor{Haibing Guan}{scsjtu,cor}
    \icmlauthor{Jiaheng Zhang}{NUS}
  \end{icmlauthorlist}

  \icmlaffiliation{scsjtu}{School of Computer Science, Shanghai Jiao Tong University, Shanghai, China}
  \icmlaffiliation{NTU}{Nanyang Technological University, Singapore}
  \icmlaffiliation{NUS}{National University of Singapore, Singapore}

  \icmlcorrespondingauthor{Haibing Guan}{hbguan@sjtu.edu.cn}

  \icmlkeywords{Machine Learning, ICML}

  \vskip 0.3in
]

\printAffiliationsAndNotice{}  
\begin{abstract}
Large Language Models with Chain-of-Thought reasoning capabilities represent valuable intellectual property, yet existing black-box watermarking methods often trade robustness for reasoning fidelity by perturbing final answers or relying on fragile trigger patterns. We propose BiCoT, a watermarking framework that embeds ownership signals into the internal geometry of reasoning traces by aligning high-saliency structural anchors with a private signature subspace while regularizing ordinary control tokens to preserve semantic capacity. This design couples the watermark with reasoning-relevant representations, making removal difficult without disrupting the features that support coherent reasoning. To enable verification under model theft and representation drift, we introduce Robust Subspace Registration (RSR), a Top-$k$ logprob-based black-box verifier that uses sentinel tokens to calibrate systematic shifts in the output distribution. Experiments show that BiCoT preserves reasoning fidelity across diverse complex reasoning tasks while achieving robust detection under fine-tuning, quantization, model-level perturbations, and adaptive output-level attacks across in-domain and out-of-distribution settings. Code is available at \url{https://github.com/JackLo111/BiCoT}.
\end{abstract}

\section{Introduction}

Large Language Models (LLMs) have emerged as powerful reasoning engines, driving innovation across industries~\citep{wang2024pathology,wang2026large,xia2025demystifying}. Currently, the immense computational resources and high-quality data required for their training render them high-value intellectual property ~\citep{wang2024say,shao2025sok,li2025rethinking,lu2024mufm}. Consequently, these models face severe threats of unauthorized theft and unlicensed commercialization~\citep{zhao2025survey,shao2026reading}. 

To secure open-source models, the research community has pivoted towards model-based watermarking, which embeds ownership signatures directly into model weights~\citep{liang2026watermarking}. We categorize existing approaches into \textit{white-box} and \textit{black-box} paradigms based on the verification mechanism~\citep{yuan2025efficient,gloaguen2025towards}. While white-box methods require parameter access, black-box methods verify ownership via API queries~\citep{yamabe2025mergeprint,xiong2025iseal,yang2025swap}. Our focus is on the latter, which primarily consists of backdoor-based and model editing-based approaches, the main two categories.

Backdoor-based watermarking injects specific trigger-action patterns into the model~\citep{li2023turning,shao2024explanation,yang2026promptcos}. While intended to be stealthy, forcing the model to overfit to a small set of fixed triggers introduces critical flaws~\citep{hu2024weaknesses,zhu2025holmes,li2025move}. This paradigm fundamentally compromises \textit{fidelity} on complex tasks and degrades \textit{generalization} due to the interference of the backdoor mechanism~\citep{ge2025backdoors}. Furthermore, the reliance on fixed patterns renders the watermark susceptible to detection and filtering, and verification becomes unreliable if triggers are leaked or spoofed~\citep{wang2019neural,xu2024towards}. To address these limitations, model editing-based methods  leverage knowledge updating algorithms to inject watermarks as specific facts~\citep{li2025editmark,yue2025pree}. Although these methods improve efficiency by avoiding full fine-tuning, they treat the watermark as static knowledge to be memorized. This approach is inherently fragile, as it relies on rigid question templates that are vulnerable to distribution shifts or adaptive attacks on prompt structures~\citep{cheng2023can,wang2024knowledge}.

Crucially, both paradigms share a fundamental limitation: they verify ownership by modifying the \textit{Final Answer}. This creates an unavoidable conflict with model utility, resulting in a zero-sum game between correctness and watermarking strength. This raises a fundamental question: \textit{Does there exist a paradigm that achieves both stealthiness and effectiveness, while strictly preserving the model's fidelity?}

Luckily, the answer is \textbf{\textit{YES}}! Inspired by \cite{guo2025towards}, we identify the Chain-of-Thought (CoT) as a superior verification space that is functionally decoupled from the final answer. Our revisit of the CoT generation process reveals a distinct causal hierarchy. First, unlike the final answer which collapses into a low-entropy state, the reasoning trace provides a high-capacity semantic buffer. Second, tokens within this trace are unequal: gradient saliency demonstrates that \textit{digit tokens} act as rigid ``structural anchors'' in GSM8K-style mathematical reasoning, whereas linguistic connectors are flexible. More generally, BiCoT treats anchors as a saliency-defined mask rather than a fixed token class; in other domains, logical connectives, legal predicates, or programming syntax may serve as anchors. Crucially, we discover a low-to-moderate injection regime where perturbations to these anchors propagate approximately linearly to the output before saturating into a plateau. This implies that ownership signals can be injected into the reasoning logic without severe catastrophic semantic interference, resolving the conflict between watermarking strength and fidelity.

Motivated by the insight that reasoning reliability hinges on a sparse set of highly ``structural anchors,'' we argue that an effective watermark must inhabit this high-saliency subspace rather than the noise-tolerant margins. Consequently, we propose \textbf{BiCoT}, a novel method that embeds ownership directly into the intrinsic geometry of the reasoning process via manifold alignment. BiCoT establishes a functional entanglement between the watermark and logical reasoning. Specifically, we formulate watermarking as a bi-level optimization problem: the objective constrains the hidden states of structural anchors to collapse onto a proprietary signature subspace, while enforcing orthogonality for pure control tokens to preserve semantic capacity. Immediate post-anchor tokens are treated as a separate causal-halo category: they may exhibit short-range residual alignment, but this effect decays rapidly and does not imply that ordinary text is globally watermark-bearing. Besides, to consider potential representation drift caused by advanced attackers, we further introduce a Robust Subspace Registration (RSR) verifier. RSR utilizes sentinel tokens to probe the global manifold distortion. By computing the median deviation of these sentinels as a drift proxy and subtracting it from the observed logits, our method effectively recenters the drifted reasoning manifold back to the canonical signature space.

Our main contributions are summarized as follows: \textbf{(1)} We analyze the causal structure of Chain-of-Thought generation and reveal the existence of \textit{structural anchors}. We observe that specific tokens disproportionately govern reasoning dynamics, forming a stable subspace that allows for signal injection without disrupting logical coherence. \textbf{(2)} We propose BiCoT, a novel watermarking paradigm that embeds ownership into the geometry of the \textit{reasoning manifold}. BiCoT uses bi-level optimization to entangle the watermark with the reasoning process itself, while treating pure control tokens and short-range causal-halo tokens separately. \textbf{(3)} We conduct extensive experiments to validate the effectiveness and fidelity of our approach. The results demonstrate that BiCoT achieves strong detection success rates against adaptive attacks while preserving the model's performance on downstream reasoning tasks across benchmark settings.

\section{Related Work}

Existing model watermarking techniques for Large Language Models (LLMs) can be broadly categorized into white-box and black-box paradigms, distinguished primarily by their verification mechanisms and access assumptions.

\subsection{White-box Watermarking}
White-box watermarking operates under the assumption that the verifier has full access to the model's parameters. These approaches typically embed ownership signatures directly into the model weights~\citep{liang2026watermarking}. Verification is conducted by inspecting the internal parameters to extract the embedded signal. While these methods allow for robust embedding, the requirement for white-box access significantly limits their utility in scenarios where models are deployed as black-box APIs or encapsulated services.

\subsection{Black-box Watermarking}
In contrast, black-box watermarking verifies ownership solely through API queries, without requiring direct access to the model's internal weights in deployed settings~\citep{yamabe2025mergeprint,xiong2025iseal}. Current literature in this domain is mainly dominated by two primary strategies: backdoor-based methods and model editing-based methods.

\textbf{Backdoor-based Approaches.}
This stream of research focuses on injecting specific trigger-action patterns into the model~\citep{li2023turning,shao2024explanation}. The core idea is to force the model to output a specific watermark signal when a predefined trigger appears in the input. However, forcing the model to overfit to a small set of fixed triggers introduces significant drawbacks. It fundamentally compromises \textit{fidelity} on complex tasks and degrades \textit{generalization} due to interference from the backdoor mechanism~\citep{gu2017badnets,li2022backdoor,puah2024blockdoor}. Additionally, the reliance on fixed patterns renders these watermarks susceptible to detection and filtering, and verification becomes unreliable if triggers are leaked or maliciously spoofed~\citep{wang2019neural,liu2019abs,guo2024zeromark}.

\textbf{Model Editing-based Approaches.}
To mitigate the inefficiencies of full fine-tuning used in some backdoor methods, model editing-based approaches leverage knowledge updating algorithms to inject watermarks as specific facts~\citep{li2025editmark,yue2025pree}. While these methods improve efficiency, they treat the watermark as static knowledge to be memorized. Consequently, this paradigm is inherently fragile, relying on rigid question templates that are vulnerable to distribution shifts or adaptive attacks on prompt structures~\citep{cheng2023can,wang2024knowledge}.

\section{Revisiting Chain-of-Thought as a Watermark Carrier}

This section asks three empirical questions that motivate BiCoT.
(1) Is CoT a better carrier than the final answer?
(2) Within CoT, are there localized reasoning-sensitive positions that can serve as anchors?
(3) Can controlled perturbations to such internal states propagate to output-level signals without disrupting reasoning?
The following analysis answers these questions in sequence and motivates Section~\ref{sec:embedding}.

\noindent \textbf{Main Settings}. Unless otherwise specified, analyses in this section are conducted on reasoning traces generated by Qwen2.5-7B-Instruct~\citep{qwen2.5} on GSM8K~\citep{cobbe2021training}. We use correctly solved samples to avoid conflating reasoning failure with representational sensitivity. Hidden states are extracted from the 4th to last transformer layer, and final-answer likelihood is computed under a fixed answer probe. Appendix~\ref{app:token_role_details} provides the full metric definitions, perturbation protocols, and model-/dataset-level consistency checks supporting this.

\subsection{CoT Is a Richer Carrier than Final Answers}
\label{sec:revisiting-token-roles}
\begin{figure}[!t]
    \centering
    \begin{subfigure}[t]{0.32\linewidth}
        \centering
        \includegraphics[width=\linewidth]{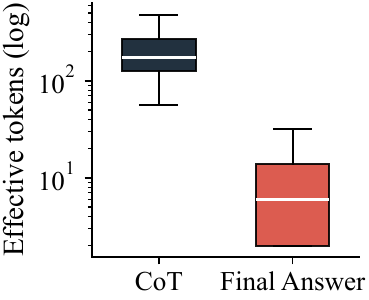}
        \caption{Capacity}
        \label{fig:token-capacity}
    \end{subfigure}
    \hfill
    \begin{subfigure}[t]{0.32\linewidth}
        \centering
        \includegraphics[width=\linewidth]{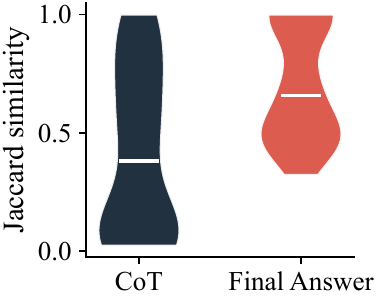}
        \caption{Diversity}
        \label{fig:token-diversity}
    \end{subfigure}
    \hfill
    \begin{subfigure}[t]{0.32\linewidth}
        \centering
        \includegraphics[width=\linewidth]{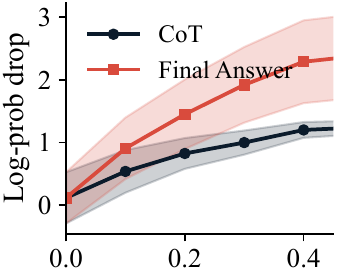}
        \caption{Importance}
        \label{fig:token-importance}
    \end{subfigure}

    \caption{
    \textbf{CoT provides a richer and less brittle carrier than final answers.}
We compare CoT tokens and final-answer tokens with GSM8K-style reasoning traces.
(a) \emph{Capacity} is measured by the effective token count, i.e., the exponential of token-level Shannon entropy, where larger values indicate a broader token distribution.
(b) \emph{Diversity} is measured by Jaccard similarity between independent generations from the same prompt; lower similarity indicates greater generation diversity.
(c) \emph{Importance} is measured by the drop in the log-probability of the correct final answer after randomly masking a given ratio of tokens.
CoT exhibits higher capacity, greater diversity, and more distributed importance, while final answers are low-entropy and more sensitive to masking.
}
    
    \label{fig:token_roles}
\end{figure}

We hereby revisit the assumption that all generated tokens are equally suitable for embedding auxiliary signals. Our analysis is conducted on chain-of-thought (CoT) reasoning traces generated on GSM8K-style mathematical reasoning tasks.
Fig.~\ref{fig:token_roles} shows that this assumption does not hold.
Across correct generations, CoT exhibits higher effective capacity and substantially greater diversity than the final answer.
Moreover, under controlled token masking, importance within CoT is distributed across tokens, whereas the final answer is dominated by a small set of critical tokens and degrades sharply when key positions are masked.

These results indicate that, in structured reasoning settings, CoT forms a larger and more robust representational space than the final answer, making it a more suitable carrier.
\begin{figure}[!t]
    \centering
    \begin{subfigure}[b]{0.48\linewidth}
        \centering
        \includegraphics[width=\linewidth]{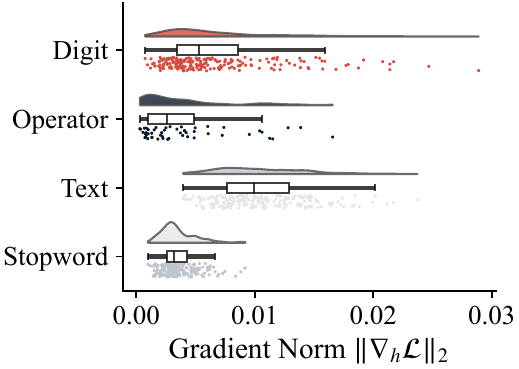}
        \caption{Gradient Saliency Distribution}
        \label{fig:anchor-saliency}
    \end{subfigure}
    \hfill
    \begin{subfigure}[b]{0.48\linewidth}
        \centering
        \includegraphics[width=\linewidth]{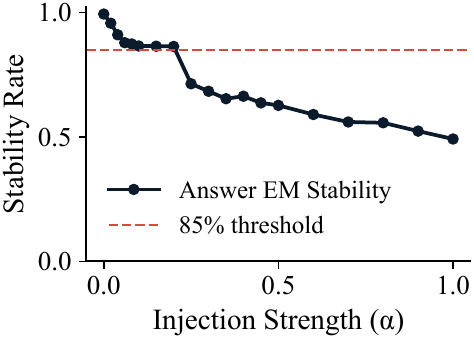}
        \caption{Reasoning Utility Resilience under Anchor Perturbation}
        \label{fig:injection-tradeoff}
    \end{subfigure}
\caption{
\textbf{Structural anchors are identified by causal saliency and validated by perturbation stability.}
(a) We compute token saliency as the $\ell_2$ norm of the gradient of the final-answer cross-entropy loss with respect to each CoT hidden state at layer $L-4$.
Digits show a heavy-tailed saliency distribution compared with operators, text tokens, and stopwords, indicating that they carry disproportionate causal influence in mathematical reasoning.
(b) We perturb anchor hidden states with increasing injection strength $\alpha$ and measure final-answer exact-match stability.
This stable high-saliency regime motivates BiCoT's design: watermark signals are injected into anchor states, while non-anchor tokens are kept orthogonal to preserve the protected LLM's semantic capacity.
}
    \label{fig:structural-anchors}
\end{figure}
\subsection{Structural Anchors within CoT}

Chain-of-thought (CoT) reasoning is often treated as a homogeneous sequence of tokens.
However, not all tokens contribute equally to the causal formation of the final answer.
We posit that a small subset of tokens acts as \emph{structural anchors}: discrete elements that disproportionately govern reasoning dynamics, possibly in a task-dependent manner.

To identify such anchors, we analyze gradient saliency with respect to hidden states along the reasoning trace.
As shown in Fig.~\ref{fig:anchor-saliency}, saliency mass is highly concentrated on \textbf{digits} in GSM8K-style mathematical reasoning, which exhibit a distinct heavy-tailed distribution.
In other domains, the anchor mask is still saliency-defined but may be realized by logical connectives, punctuation, legal predicates, or programming syntax rather than digits. Thus, digits are an instantiation of saliency-defined anchors in mathematical reasoning, rather than a fixed requirement of the BiCoT framework.
This separation indicates that the anchor concept is task-dependent rather than lexical, but the underlying geometry is shared.

If these anchors are indeed structural, controlled perturbations should induce a predictable utility--perturbation curve: the model should remain stable under low-to-moderate interventions, while stronger interventions should eventually reduce answer accuracy.
Fig.~\ref{fig:injection-tradeoff} confirms this hypothesis.
When perturbations are applied to anchor dimensions, the model maintains high answer stability within a broad low-to-moderate operating regime.
Beyond a critical strength, performance eventually degrades smoothly overall and then saturates, revealing a stable pre-saturation linear regime followed by a plateau in the utility--injection tradeoff.

These observations suggest that structural anchors define a low-dimensional subspace where causal influence is both concentrated and resilient.
Such a stable regime enables targeted interventions that bind auxiliary signals to the reasoning process while preserving downstream fidelity.
This property underlies our subsequent watermarking strategy.

\subsection{Signal Propagation Through the Reasoning Trace}
\label{sec:signal-propagation}
\begin{figure}[t]
    \centering

    \begin{subfigure}{0.48\linewidth}
        \centering
        \includegraphics[width=\linewidth]{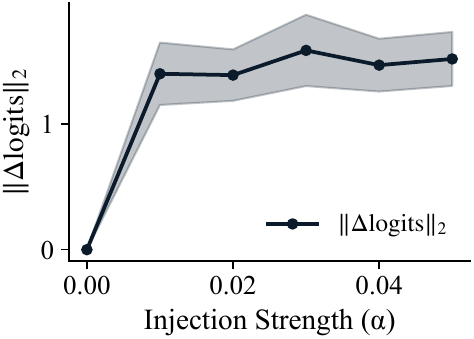}
        \caption{Signal propagation from hidden states to output logits.}
        \label{fig:signal-prop-curve}
    \end{subfigure}
    \hfill
    \begin{subfigure}{0.48\linewidth}
        \centering
        \includegraphics[width=\linewidth]{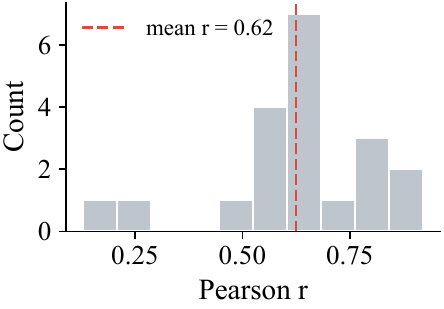}
        \caption{Linearity of signal propagation measured by Pearson $r$.}
        \label{fig:signal-prop-r}
    \end{subfigure}

    \caption{\textbf{Controlled CoT perturbations are observable at the output-logit level.} (a) We inject anchor-level perturbations with strength $\alpha$ into CoT hidden states and measure the $\ell_2$ distance between the original and perturbed output logits. The logit response increases approximately linearly in the low-to-moderate injection range and then saturates into a plateau, indicating that latent CoT signals can propagate to observable outputs without abrupt disruption. Shaded regions denote 95\% confidence intervals. (b) Pearson correlations between injection strength and output-logit shift across samples. The positive sample-wise correlation distribution suggests a stable pre-saturation propagation regime.}
    \label{fig:signal-propagation}
\end{figure}

We analyze signal propagation by injecting controlled perturbations into chain-of-thought (CoT) representations and measuring their effect on the final output in structured reasoning tasks. As shown in Fig.~\ref{fig:signal-propagation}, once the injection strength exceeds a small threshold, perturbations in CoT selected anchor states consistently propagate to the output logits.

As the injection strength further increases, the induced output change saturates into a stable plateau. Beyond this point, increasing the strength results in only marginal changes in the output distribution, suggesting that the injected signal has already been fully stably integrated into the internal reasoning process rather than acting as an external bias.

Within the low-to-moderate pre-saturation injection range, signal propagation exhibits an approximately linear relationship between CoT perturbations and output-logit changes, reflected by stable correlation patterns across samples. This linearity indicates a controllable operating region in which the watermark signal remains effective while preserving downstream task behavior without compromising fidelity.

These observations identify a pre-saturation regime in which watermark signals are simultaneously present, detectable, stable, and non-disruptive. This regime naturally motivates reliable geometric optimization strategies that balance downstream task fidelity with alignment to CoT representations.

\begin{figure*}[t]
    \centering
    \includegraphics[width=\linewidth]{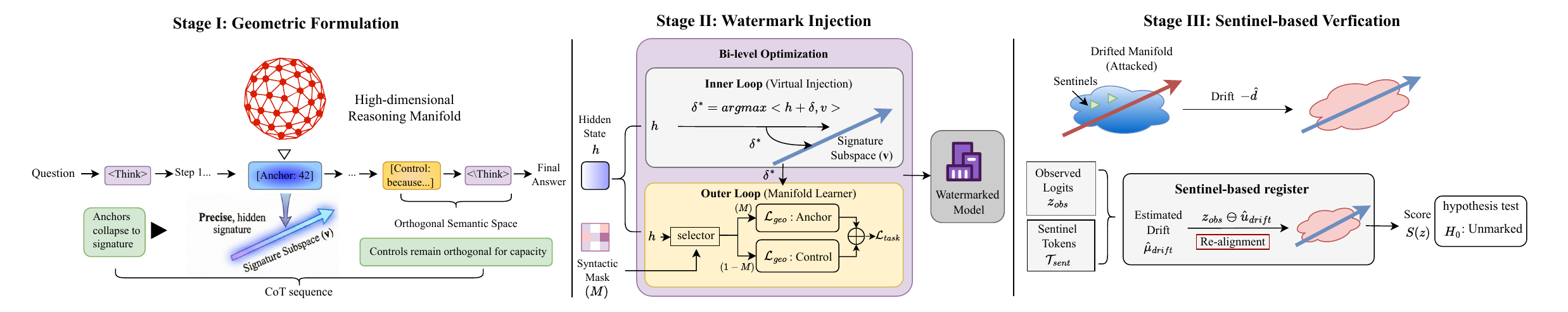}
\caption{
\textbf{Overview of the BiCoT architecture.} The proposed overall pipeline proceeds in three stages:
\textbf{Stage I: Geometric Formulation (Left).} We define the reasoning process on a high-dimensional manifold where normalized latent states $\mathbf{h}$ interact with a compact private signature subspace $\mathcal{S}_{\mathbf{v}}$. Within the CoT sequence, saliency-defined \textit{anchor} tokens are constrained to collapse onto the signature, while pure \textit{control} tokens remain orthogonal to preserve semantic capacity. Tokens immediately following an anchor are treated as a separate \textit{causal-halo} category because they may retain short-range residual alignment (Sec.~\ref{sec:preliminaries}).
\textbf{Stage II: Watermark Injection (Middle).} The injection is formulated as a \textit{Bi-level Optimization} problem. A \textit{Syntactic Mask} ($M(x_t)$) acts as a selector (abbreviated as $M$ in Fig.): the \textit{Inner Loop} computes a virtual hidden-state injection $\delta^*$ for alignment, while the \textit{Outer Loop} jointly updates the model parameters to minimize a combined objective of geometric constraints ($\mathcal{L}_{geo}$) and task utility ($\mathcal{L}_{task}$) (Sec.~\ref{sec:embedding}).
\textbf{Stage III: Sentinel-based Verification (Right).} To address distributional drift caused by attacks, we employ a robust sentinel-based register. Using pre-identified sentinel tokens, the system estimates the drift vector and recenters the observed logits $\mathbf{z}_{obs}$ before the final calibrated hypothesis test (Sec.~\ref{sec:verification}).
}
    \label{fig:overall-architecture}
\end{figure*}

\section{The Proposed Method}
\subsection {Preliminaries}
\paragraph{Threat Model.}
We consider a realistic model-misuse scenario involving two parties: the \textit{LLM Provider} and a \textit{Malicious Developer}. The provider owns a source model and releases it under specific license constraints. The malicious developer violates these constraints by deploying a derivative or stolen model for commercial use through an API service. The provider aims to remotely verify whether the suspicious API is derived from the protected source model using limited black-box queries, while the developer aims to evade verification without sacrificing downstream utility.

\textbf{Capabilities and Constraints.}\quad
We distinguish the watermark embedding stage from the verification stage. During embedding, the provider has white-box access to its own clean source model and can modify its parameters to inject the watermark. During verification, however, the provider has only restricted logit-based black-box access to the suspect API: it can submit ordinary task prompts and observe only the Top-$k$ next-token log-probabilities returned by the API~\cite{li2025differentiation,dai2025seal}. BiCoT does not require access to the suspect model's weights, hidden states, training data, or full-vocabulary logits. The verifier also only uses the returned Top-$k$ log-probabilities to evaluate the watermark signal over observable sentinel tokens.

This assumption corresponds to a practical logprob-enabled black-box API setting. If the suspect service deliberately disables all log-probability outputs and exposes only sampled text, remote distribution-based verification becomes infeasible; we therefore treat strictly text-only APIs as a challenging boundary case outside our current threat model.

To evade detection, the developer may employ adaptive attacks at two levels: (1) \textit{model-level attacks}, such as supervised fine-tuning, parameter-efficient fine-tuning, or common quantization to alter internal signatures; (2) \textit{output-level attacks}, such as perturbing, paraphrasing, or post-processing generated responses. Crucially, all evasion attempts are subject to a strict \textit{utility constraint}: the developer must preserve the model's downstream performance to maintain the practical commercial value of the stolen service.

\subsection{Overview}
We introduce BiCoT, a framework that embeds ownership signatures directly into the geometry of the reasoning manifold. We select saliency-defined anchor tokens as the primary carriers for the watermark, motivated by their strong causal influence on reasoning dynamics. Our analysis shows that in GSM8K-style mathematical reasoning these anchors are predominantly digits, while in other domains they may correspond to logical connectives, punctuation, or syntax tokens. The key design principle is to align anchor states with a secret signature subspace while keeping control tokens orthogonal and treating short-range halo tokens separately.

To ensure that this rigid entanglement does not degrade task performance, we leverage the unused capacity of the linguistic context. As observed in Fig.~\ref{fig:token-capacity} and~\ref{fig:token-diversity}, non-anchor tokens (e.g., text connectors) exhibit substantially higher capacity and diversity than the final answer. We utilize these tokens as a flexible \textit{semantic buffer}. In our optimization, while anchors are constrained to collapse, control tokens are encouraged to remain orthogonal to the signature. This allows the model to utilize the high degrees of freedom in the linguistic subspace to compensate for the residual geometric constraints imposed on the digits.

Finally, our bi-level optimization strategy is grounded in the signal propagation analysis of Sec.~\ref{sec:signal-propagation}. The existence of a linear propagation regime (Fig. \ref{fig:signal-propagation}) implies that latent perturbations on CoT tokens can controllably influence the output distribution without inducing collapse, provided the injection remains within the observed plateau. BiCoT turns the empirical propagation property into a controlled optimization objective in a principled manner, rather than imposing an external output bias. This design enables the watermark signal to remain coupled with reasoning dynamics while avoiding direct interference with answer formation.

\subsection{Stage I: Geometric Formulation}
\label{sec:preliminaries}
\begin{figure}[t]
    \centering

    \begin{subfigure}[t]{0.54\linewidth}
        \centering
        \includegraphics[width=\linewidth]{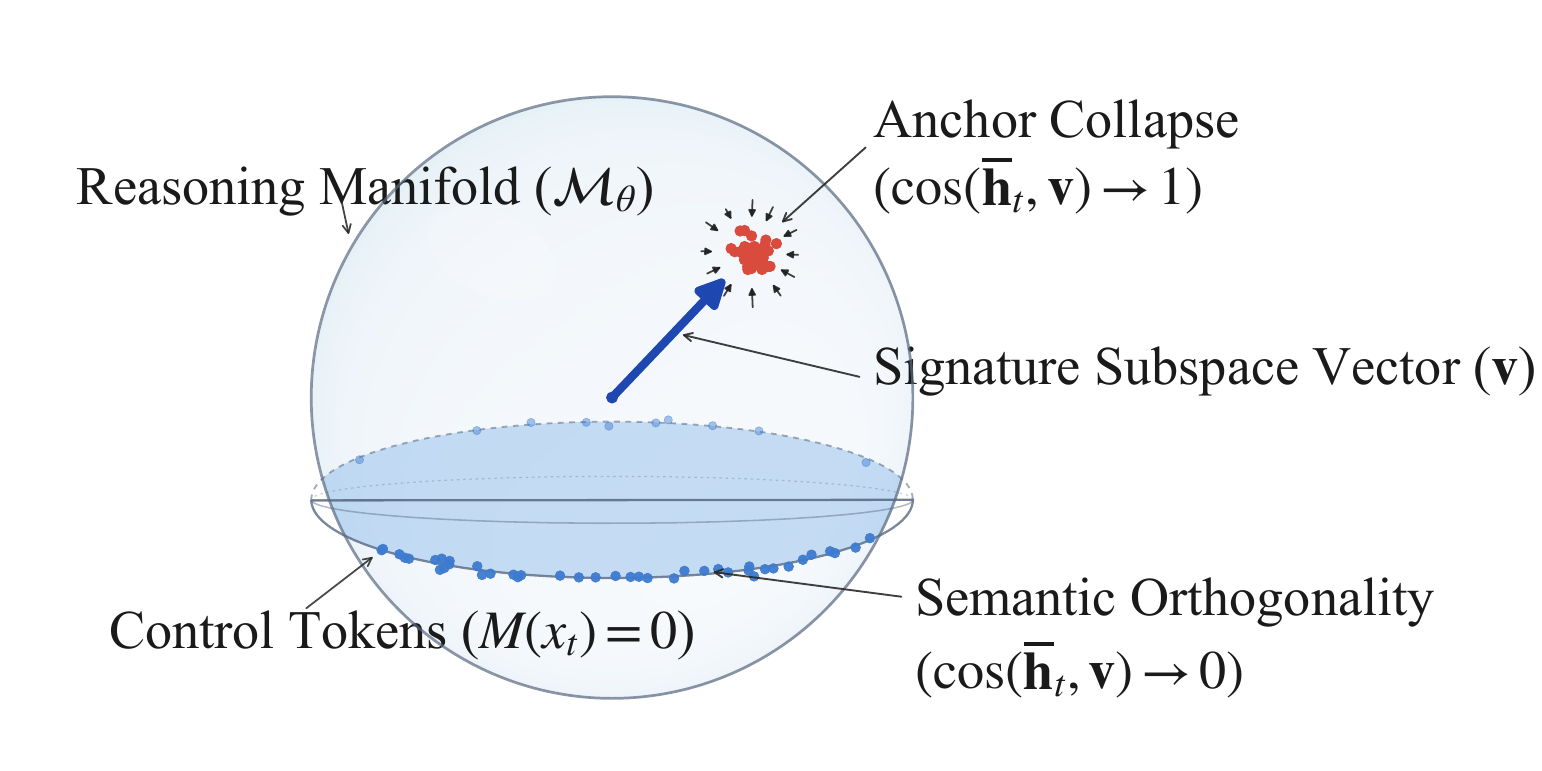}
        \caption{Spatial view: manifold collapse.}
        \label{fig:spatial-view}
    \end{subfigure}
    \hfill
    \begin{subfigure}[t]{0.44\linewidth}
        \centering
        \includegraphics[width=\linewidth]{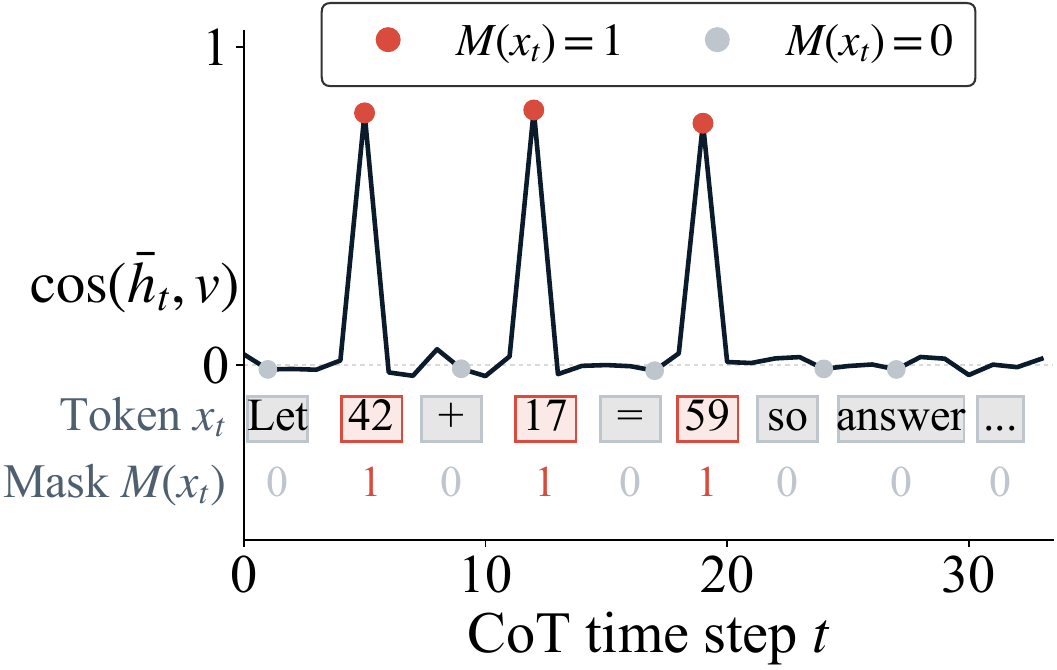}
        \caption{Temporal view: trajectory alignment.}
        \label{fig:temporal-view}
    \end{subfigure}

    \caption{
        \textbf{Geometry of the reasoning manifold.}
        (a) On the normalized latent hypersphere, anchor-token representations
        are encouraged to collapse onto the signature subspace $S_v$, while
        control-token representations remain orthogonal to preserve semantic
        capacity.
        (b) As CoT sequence unfolds, the signature projection
        $\cos(\bar{h}_t,v)$ remains near zero for control tokens with
        $M(x_t)=0$, but spikes at anchor tokens with $M(x_t)=1$. The token
        and mask rows illustrate how the watermark is intermittently woven
        into the reasoning trajectory through structural anchors.
    }
    \label{fig:manifold-geometry}
\end{figure}

We model the latent state of a Large Language Model (LLM) $f_\theta$ as a continuous trajectory on a high-dimensional \textit{Reasoning Manifold} $\mathcal{M}_\theta \subset \mathbb{R}^d$~\citep{li2025rema}. Unlike perturbations applied to the discrete output distribution, we define watermarking as an intrinsic geometric constraint on $\mathcal{M}_\theta$. We introduce a proprietary \textit{Signature Subspace} $\mathcal{S}_\mathbf{v}$, parameterized by a fixed unit vector $\mathbf{v} \in \mathbb{R}^d$. As visualized in the spatial view of Fig. \ref{fig:manifold-geometry}~(a), our method enforces a structural collapse of the manifold onto $\mathcal{S}_\mathbf{v}$ for specific anchor states.

The optimization objective forces $\mathcal{M}_\theta$ to align locally with $\mathcal{S}_\mathbf{v}$ conditionally based on token semantics. We employ a semantic syntactic indicator $M(x_t) \in \{0, 1\}$ to partition the sequence into \textit{anchor tokens} and \textit{control tokens}. Fig. \ref{fig:manifold-geometry}~(b) illustrates this sequential dynamic: the reasoning trajectory $\mathbf{h}_t$ intermittently aligns with $\mathbf{v}$ when $M(x_t)=1$ and remains orthogonal otherwise. Formally, we seek a parameter configuration $\theta^*$ satisfying the geometric limit exactly:
\begingroup
\setlength{\abovedisplayskip}{4pt}
\setlength{\belowdisplayskip}{4pt}
\setlength{\abovedisplayshortskip}{2pt}
\setlength{\belowdisplayshortskip}{4pt}
\begin{equation}
    \lim_{\theta \to \theta^*} \cos(\bar{\mathbf{h}}_t, \mathbf{v}) = 
    \begin{cases} 
    1 & \text{if } M(x_t) = 1 \quad (\textit{collapsed}), \\
    0 & \text{if } M(x_t) = 0 \quad (\textit{orthogonal}).
    \end{cases}
    \label{eq:manifold_alignment}
\end{equation}
\endgroup
This formulation tightly entangles the watermark with the functional reasoning process, making the signature inseparable from the underlying manifold structure itself.

\subsection{Stage II: Watermark Injection}
\label{sec:embedding}

We formulate the watermarking process as a bi-level optimization problem. Our objective is to embed a geometric signature $\mathbf{v}$ into the hidden states $\mathbf{h}$ while maintaining task performance.
We adopt a bi-level formulation to enforce alignment. In the inner loop, we find a perturbation $\boldsymbol{\delta}^*$ that maximizes the projection onto the signature subspace $\mathcal{S}_\mathbf{v}$. In the outer loop, we update the parameters $\theta$ to minimize the task loss and the geometric alignment error. Formally:

\begin{align}
    \min_{\theta} \quad & \mathcal{L}_{\text{task}}(f_\theta(\mathbf{h} + \boldsymbol{\delta}^*), y) + \lambda \mathcal{L}_{\text{geo}}(\mathbf{h}, \mathbf{v}) + \gamma \mathcal{L}_{\text{KL}} \label{eq:bilevel_obj} \\
    \text{s.t.} \quad & \boldsymbol{\delta}^* = \mathop{\arg\max}_{||\boldsymbol{\delta}|| \le \epsilon} \langle \mathbf{h} + \boldsymbol{\delta}, \mathbf{v} \rangle. \nonumber
\end{align}
Here, $\mathcal{L}_{\text{task}}$ represents the cross-entropy loss on the perturbed states, which serves as a robustness regularizer.

The geometric loss $\mathcal{L}_{\text{geo}}$ imposes structural constraints on the reasoning manifold. We introduce a binary mask $M(x_t) \in \{0, 1\}$ to distinguish anchor tokens ($M(x_t)=1$) from control tokens ($M(x_t)=0$). The loss is defined as:
\begin{equation}
\mathcal{L}_{geo}(H_\ell,v)
=
\frac{1}{T}\sum_{t=1}^{T}
\left[
M_t \|\bar h_{\ell,t}-v\|_2^2
+
(1-M_t)\langle \bar h_{\ell,t},v\rangle^2
\right].
    \label{eq:geo_loss}
\end{equation}
where $\bar{\mathbf{h}}$ denotes the $\ell_2$-normalized hidden state.

Anchors ($M(x_t)=1$): We minimize the Euclidean distance between the normalized state and the signature $\mathbf{v}$. This term forces the manifold to collapse onto the signature subspace at specific steps.
Controls ($M(x_t)=0$): We minimize the squared cosine similarity. This term enforces orthogonality, preserving the capacity of control tokens for reasoning logic (Sec.~\ref{sec:revisiting-token-roles}). $\mathcal{L}_{KL}$ denotes the local consistency regularizer between the next-token distributions of the watermarked model and the frozen base model on the same CoT prefix.

\begin{proposition}[Equivalence]
    \label{prop:collapse}
    For unit vectors $\bar{\mathbf{h}}$ and $\mathbf{v}$, minimizing $\|\bar{\mathbf{h}} - \mathbf{v}\|^2$ is equivalent to maximizing $\cos(\mathbf{h}, \mathbf{v})$. An upper bound on the loss $\xi$ implies a lower bound:
    \begin{equation}
        \mathcal{L}_{\text{geo}} \le \xi \implies \cos(\mathbf{h}, \mathbf{v}) \ge 1 - \frac{\xi}{2}.
    \end{equation}
\end{proposition}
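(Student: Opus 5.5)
The plan rests on the polarization identity for unit vectors. Since $\|\bar{\mathbf{h}}\|=\|\mathbf{v}\|=1$, I would expand the squared norm:
\[
\|\bar{\mathbf{h}}-\mathbf{v}\|_2^2=\|\bar{\mathbf{h}}\|^2+\|\mathbf{v}\|^2-2\langle\bar{\mathbf{h}},\mathbf{v}\rangle=2-2\cos(\mathbf{h},\mathbf{v}).
\]
Here I use that $\langle\bar{\mathbf{h}},\mathbf{v}\rangle=\cos(\bar{\mathbf{h}},\mathbf{v})=\cos(\mathbf{h},\mathbf{v})$, because normalization by $\|\mathbf{h}\|>0$ does not change the angle. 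The map $c\mapsto 2-2c$ is strictly decreasing and affine. So minimizing $\|\bar{\mathbf{h}}-\mathbf{v}\|^2$ over any feasible set of hidden states (equivalently, over $\theta$) has the same minimizers as maximizing $\cos(\mathbf{h},\mathbf{v})$. This gives the equivalence claim.

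For the bound, I would read the hypothesis as applying to the anchor term of $\mathcal{L}_{\text{geo}}$ in \eqref{eq:geo_loss}, that is, $\|\bar{\mathbf{h}}-\mathbf{v}\|^2\le\xi$ for an anchor token. Substituting the identity gives $2-2\cos(\mathbf{h},\mathbf{v})\le\xi$, and rearranging yields $\cos(\mathbf{h},\mathbf{v})\ge 1-\xi/2$.

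The main obstacle is making the hypothesis precise, since $\mathcal{L}_{\text{geo}}$ is an average over $T$ positions that also contains the nonnegative control terms. I would handle this in two steps. First, every summand is nonnegative, so $\mathcal{L}_{\text{geo}}\le\xi$ gives $\frac1T\sum_t M_t\|\bar{\mathbf{h}}_t-\mathbf{v}\|^2\le\xi$. Second, for the single-token (per-term) statement this is exactly the hypothesis above. For the sequence version I would state the conclusion as a bound on the average anchor cosine,
\[
\frac{1}{|A|}\sum_{t\in A}\cos(\mathbf{h}_t,\mathbf{v})\ge 1-\frac{T\xi}{2|A|},
\]
where $A=\{t:M_t=1\}$. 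The worst-case per-anchor bound weakens correspondingly to $1-T\xi/2$. I would note that the proposition as stated corresponds to the per-token (or $T=1$) interpretation.
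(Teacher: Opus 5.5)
Your proof is correct. Its core, the expansion $\|\bar{\mathbf{h}}-\mathbf{v}\|_2^2=2-2\cos(\mathbf{h},\mathbf{v})$ for unit vectors, is the only natural argument: it gives the equivalence, since it is a strictly decreasing affine reparametrization, and it gives the bound. The paper never writes out a proof of this proposition; Appendix~\ref{app:rsr_theory} proves only Proposition~\ref{prop:drift_invariance} and Theorem~\ref{thm:type1_control}. It also tacitly applies the per-token identity directly to $\mathcal{L}_{\text{geo}}$.

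You go beyond the paper by noticing that, with $\mathcal{L}_{\text{geo}}$ defined as the $1/T$-average in \eqref{eq:geo_loss}, the literal implication is false. Take $T=2$, an anchor with $\cos(\mathbf{h}_1,\mathbf{v})=1-\xi$ for some $0<\xi\le 2$, and a control token exactly orthogonal to $\mathbf{v}$. Then $\mathcal{L}_{\text{geo}}=\tfrac12(2\xi+0)=\xi$, yet the anchor cosine is strictly below $1-\xi/2$.

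Your worst-case per-anchor bound $1-T\xi/2$ is attained in this example. More generally it is attained whenever all the loss sits on a single anchor. Your averaged bound $1-T\xi/(2|A|)$, valid for $A\neq\emptyset$, is the right sequence-level statement. So restricting to the per-token reading is what makes the proposition true, not an unnecessary caution.
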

This equivalence ensures that by minimizing the geometric loss during training, we explicitly enlarge the cosine margin in the latent space, providing a theoretically guaranteed, stable, and reliable signal-to-noise ratio for verification.

\subsection{Stage III:  Sentinel-based Verification}
\label{sec:verification}

Sec.~\ref{sec:signal-propagation} demonstrated that watermark signals propagate linearly when the model operates within a stable regime. However, real-world attacks (e.g., heavy quantization or SFT) often push the latent representation beyond this linear plateau, inducing a systematic \textit{distributional drift}. This displacement renders simple correlation thresholds unreliable. To address this, we introduce \textbf{Robust Subspace Registration (RSR)}, a blind calibration algorithm that adaptively recovers the signature by realigning the drifting manifold.

\textbf{Manifold Drift Hypothesis.} 
We model the observed logits $\mathbf{z}_{obs}$ of a watermarked model under attack as an affine transformation of the clean signature. Let $\mathbf{W}_{head}$ be the projection matrix of the language head. We posit that attacks induce a translation vector $\mathbf{d}_{drift}$ in the hidden space:
\begin{equation}
    \mathbf{z}_{obs} = \mathbf{W}_{head}(\mathbf{h} + \mathbf{d}_{drift}) + \boldsymbol{\epsilon} = \mathbf{z}_{clean} + \underbrace{\mathbf{W}_{head}\mathbf{d}_{drift}}_{\text{Systematic Bias}} + \boldsymbol{\epsilon},
\end{equation}
where $\boldsymbol{\epsilon}$ represents unstructured noise. A naive detector fails because the systematic bias strongly dominates the correlation signal, especially under post-deployment attacks.

\textbf{Sentinel-Based Registration.} 
Since we lack access to the ground-truth weights during verification, we employ a set of \textit{Sentinel Tokens} $\mathcal{T}_{sent}$—tokens identified during training as having high sensitivity to the signature subspace. During verification, we compute the robust central tendency (median) of the sentinels' deviation from their pre-computed base statistics to estimate the drift vector $\hat{\boldsymbol{\mu}}_{drift}$. The registered detection score $s$ for a query is then computed by projecting the calibrated logits onto the signature:
\begin{equation}
    s(\mathbf{z}_{obs}) = \left\langle \frac{\mathbf{z}_{obs} - \hat{\boldsymbol{\mu}}_{drift} - \boldsymbol{\mu}_{base}}{\|\mathbf{z}_{obs} - \hat{\boldsymbol{\mu}}_{drift} - \boldsymbol{\mu}_{base}\|}, \mathbf{W}_{head}\mathbf{v} \right\rangle.
    \label{eq:rsr_score}
\end{equation}
By subtracting $\hat{\boldsymbol{\mu}}_{drift}$, RSR effectively ``rotates'' the drifted point cloud accurately back to the canonical signature space (as visualized in Fig.~\ref{fig:overall-architecture} \textit{right} VERIFICATION).

\textbf{Null-Hypothesis Testing.} The effectiveness of our Z-score aggregation relies on the alignment bound established in Proposition \ref{prop:collapse}. We fit a Gaussian null distribution $\mathcal{N}(\mu_0, \sigma_0^2)$ using responses from non-watermarked models for robust calibration under the null. The final verification decision is formulated as a hypothesis test: we reject the null hypothesis $H_0: \mathbf{h} \perp \mathbf{v}$ if the aggregated Z-score of the registered stream exceeds a confidence level $\alpha$.

\begin{proposition}[Drift Invariance]
    \label{prop:drift_invariance}
    Assume the estimator $\hat{\boldsymbol{\mu}}_{drift}$ perfectly captures the systematic bias such that $\hat{\boldsymbol{\mu}}_{drift} \approx \mathbf{W}_{head}\mathbf{d}_{drift}$. The RSR score $s(\mathbf{z}_{obs})$ is invariant to any translational manifold shift $\mathbf{d}_{drift}$.
\end{proposition}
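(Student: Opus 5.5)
The plan is a direct substitution argument: insert the Manifold Drift Hypothesis into the RSR score \eqref{eq:rsr_score} and show that the drift cancels.

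\emph{Step 1: make the assumption exact.} Proposition~\ref{prop:drift_invariance} is stated with ``$\approx$'', so I first read it as an equality, $\hat{\boldsymbol{\mu}}_{drift} = \mathbf{W}_{head}\mathbf{d}_{drift}$. I also fix the convention that $\boldsymbol{\mu}_{base}$ is a drift-independent constant, since it is pre-computed from base statistics on the clean model.

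\emph{Step 2: cancel the drift in the numerator.} By the Manifold Drift Hypothesis,
\begin{equation*}
\mathbf{z}_{obs} = \mathbf{z}_{clean} + \mathbf{W}_{head}\mathbf{d}_{drift} + \boldsymbol{\epsilon}.
\end{equation*}
Substituting this into the numerator of \eqref{eq:rsr_score} and applying the exact estimator gives
\begin{equation*}
\mathbf{z}_{obs} - \hat{\boldsymbol{\mu}}_{drift} - \boldsymbol{\mu}_{base} = \mathbf{z}_{clean} + \boldsymbol{\epsilon} - \boldsymbol{\mu}_{base}.
\end{equation*}
The right-hand side does not depend on $\mathbf{d}_{drift}$.

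\emph{Step 3: conclude for the score.} The normalization is a deterministic function of this vector, and the reference direction $\mathbf{W}_{head}\mathbf{v}$ is fixed. Hence $s(\mathbf{z}_{obs}) = s_{clean}(\mathbf{z}_{clean}+\boldsymbol{\epsilon})$ for every translation $\mathbf{d}_{drift}$, which is the claimed invariance.

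\emph{Step 4: extend to the test statistic.} The aggregated Z-score is a function of the per-query scores, so it is invariant as well. Consequently, the null calibration $\mathcal{N}(\mu_0,\sigma_0^2)$ and the alternative distribution are unchanged by drift.

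\emph{Main obstacle.} The delicate point is that the estimator is only approximate and is built from a median over sentinel tokens. I would add a robustness remark for the case $\hat{\boldsymbol{\mu}}_{drift} = \mathbf{W}_{head}\mathbf{d}_{drift} + \mathbf{r}$. For a nonzero vector $\mathbf{u}$, the map $\mathbf{u}\mapsto \mathbf{u}/\|\mathbf{u}\|$ is Lipschitz with constant $2/\|\mathbf{u}\|$. This gives
\begin{equation*}
|s - s_{clean}| \le \frac{2\|\mathbf{r}\|\,\|\mathbf{W}_{head}\mathbf{v}\|}{\|\mathbf{z}_{clean}+\boldsymbol{\epsilon}-\boldsymbol{\mu}_{base}\|},
\end{equation*}
so the score degrades gracefully with residual error. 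The bound also makes explicit the nondegeneracy condition $\mathbf{z}_{clean}+\boldsymbol{\epsilon}\neq\boldsymbol{\mu}_{base}$ that is needed for the score to be defined at all. Finally, one must note that the Top-$k$ restriction means these vectors live on the observed coordinates. The cancellation in Step 2 is coordinatewise, so it survives this restriction.
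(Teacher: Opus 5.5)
Your proposal is correct and is the same direct-substitution argument the paper uses: plug the drift model into the RSR score and let $\hat{\boldsymbol{\mu}}_{drift}$ cancel $\mathbf{W}_{head}\mathbf{d}_{drift}$. Your version is actually more careful than the paper's, which drops $\boldsymbol{\mu}_{base}$ and the normalization when it writes the reduced score as $\langle \mathbf{z}_{clean}+\boldsymbol{\epsilon},\mathbf{W}_{head}\mathbf{v}\rangle$, and your Lipschitz robustness bound is a correct addition; the one caveat is that your Top-$k$ remark only holds if the observed sentinel coordinates are fixed independently of the drift, because a large shift can change which tokens appear in the Top-$k$.
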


While Proposition \ref{prop:drift_invariance} ensures signal recovery, real-world verification requires rigorous error control. The following theorem guarantees that RSR prevents false accusations regardless of the attack magnitude, under arbitrary affine perturbations. The proof is provided in Appendix~\ref{app:proof_drift_invariance}.

\begin{theorem}[Robust Type-I Error Control]
    \label{thm:type1_control}
    Let $\mathcal{H}_0$ be the null hypothesis (non-watermarked model) where logits follow $\mathbf{z} \sim \mathcal{N}(\boldsymbol{\mu}, \mathbf{I})$. Consider an attack that introduces an arbitrary affine shift $\mathbf{d} \in \mathbb{R}^V$. For any significance level $\alpha \in (0, 1)$, let $\tau_\alpha$ be the critical threshold derived from the standard normal distribution. The RSR-calibrated test statistic $Z_{RSR}$ satisfies:
    \begin{equation}
        \sup_{\mathbf{d} \in \mathbb{R}^V} \mathbb{P}\left( Z_{RSR}(\mathbf{z} + \mathbf{d}) > \tau_\alpha \mid \mathcal{H}_0 \right) \le \alpha.
    \end{equation}
\end{theorem}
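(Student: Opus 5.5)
The plan is to reduce the attacked statistic, under $\mathcal{H}_0$, to a pivotal standard normal quantity whose law does not depend on $\mathbf{d}$. The threshold $\tau_\alpha$ is then an exact (or conservative) $(1-\alpha)$-quantile uniformly in $\mathbf{d}$.

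\textbf{Step 1: make the calibration explicit.} I would define the RSR-calibrated statistic concretely, since the theorem leaves it implicit. Let $\mathbf{u} = \mathbf{W}_{head}\mathbf{v}/\|\mathbf{W}_{head}\mathbf{v}\|$ be the unit signature direction in logit space. Let $\hat{\boldsymbol{\mu}}_{drift}$ be the sentinel-based drift estimate of Sec.~\ref{sec:verification}, and let $\boldsymbol{\mu}_{base}$ be the null mean $\boldsymbol{\mu}$ obtained from the calibration fit. Take the unnormalized linear version of Eq.~\eqref{eq:rsr_score},
\[
Z_{RSR}(\mathbf{z}) = \langle \mathbf{z} - \hat{\boldsymbol{\mu}}_{drift}(\mathbf{z}) - \boldsymbol{\mu}, \mathbf{u}\rangle .
\]
The key structural property I need is \emph{translation equivariance} of the drift estimator: $\hat{\boldsymbol{\mu}}_{drift}(\mathbf{z}+\mathbf{d}) = \hat{\boldsymbol{\mu}}_{drift}(\mathbf{z}) + \mathbf{d}$. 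This holds for a median of sentinel deviations from fixed base statistics, provided the shift is registered coordinatewise, as in the premise of Proposition~\ref{prop:drift_invariance}.

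\textbf{Step 2: invariance.} Equivariance gives $Z_{RSR}(\mathbf{z}+\mathbf{d}) = Z_{RSR}(\mathbf{z})$ for every $\mathbf{d}$. This is the content of Proposition~\ref{prop:drift_invariance}, which I would invoke directly. The supremum over $\mathbf{d}\in\mathbb{R}^V$ therefore collapses to the single unattacked probability $\mathbb{P}(Z_{RSR}(\mathbf{z})>\tau_\alpha\mid\mathcal{H}_0)$.

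\textbf{Step 3: null distribution.} Under $\mathcal{H}_0$, $\mathbf{z}-\boldsymbol{\mu}\sim\mathcal{N}(\mathbf{0},\mathbf{I})$. If the drift estimate is computed from sentinel coordinates independent of the direction $\mathbf{u}$, or more simply if it is idealized as the true systematic bias, then $\langle \mathbf{z}-\boldsymbol{\mu},\mathbf{u}\rangle\sim\mathcal{N}(0,1)$ because $\|\mathbf{u}\|=1$. Hence $\mathbb{P}(Z_{RSR}>\tau_\alpha)=1-\Phi(\tau_\alpha)=\alpha$.

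If instead a noisy estimator is used, I would argue conservativeness: the subtracted term is either independent noise, which is handled by standardizing by the inflated variance, or it is accounted for in the fitted $\sigma_0^2$. In both cases the bound becomes $\le\alpha$. Aggregating $n$ independent queries as $\frac{1}{\sqrt n}\sum_i Z_i$ preserves the $\mathcal{N}(0,1)$ law, so the aggregated Z-score test inherits the same guarantee.

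\textbf{Main obstacle.} The delicate step is Step 1. A coordinatewise median is equivariant only to shifts that act uniformly on the sentinel coordinates, yet the theorem allows an arbitrary $\mathbf{d}\in\mathbb{R}^V$. I would first try to make equivariance hold by defining the registration as subtracting the full estimated bias vector, so that it is exact in the idealized setting of Proposition~\ref{prop:drift_invariance}. I would then state explicitly that the guarantee is relative to that assumption. Failing that, I would restrict to shifts in the span captured by the sentinels and note that the orthogonal remainder is what the assumption excludes.
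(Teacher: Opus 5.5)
\textbf{Verdict: genuine gap.} Your skeleton matches the paper's: equivariance gives invariance, the supremum collapses to the unattacked case, and a Gaussian tail finishes the argument. The gap is at the step you flag as the main obstacle. The paper supplies an idea there that you are missing. Your fallback, an oracle registration that subtracts the true bias or a restriction on $\mathbf{d}$, proves only a statement that holds by construction, not the theorem for the median-based RSR statistic.

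\textbf{The missing idea.} In the paper's proof the sentinels are not coordinates of the query logit vector. They are $K$ separate observations $\mathbf{s}_1,\dots,\mathbf{s}_K\in\mathbb{R}^V$, and the drift model shifts the query and every sentinel by the \emph{same} vector: $\mathbf{z}_{obs}=\mathbf{z}_{clean}+\mathbf{d}$ and $\mathbf{s}_{k,obs}=\mathbf{s}_{k,clean}+\mathbf{d}$. The estimator is the median taken \emph{across sentinels}, separately in each coordinate: $\hat\mu_j=\mathrm{median}_k\, s^{(j)}_{k,obs}$. Scalar location-equivariance of the median then gives $\hat{\boldsymbol\mu}(\mathcal{S}_{obs})=\hat{\boldsymbol\mu}(\mathcal{S}_{clean})+\mathbf{d}$ for every $\mathbf{d}\in\mathbb{R}^V$ (Lemma~\ref{lemma:equivariance}). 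Uniformity is needed only across sentinels, never across coordinates. So the obstacle you describe does not arise, and no oracle or restriction on $\mathbf{d}$ is needed.

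\textbf{Why your Step~1 fails as written.} You take $\hat{\boldsymbol\mu}_{drift}$ to be a function of the query $\mathbf{z}$ (possibly with independent randomness) satisfying $\hat{\boldsymbol\mu}_{drift}(\mathbf{z}+\mathbf{d})=\hat{\boldsymbol\mu}_{drift}(\mathbf{z})+\mathbf{d}$ for all $\mathbf{d}$. Setting $\mathbf{z}=\mathbf{0}$ forces $\hat{\boldsymbol\mu}_{drift}(\mathbf{z})=\mathbf{z}+\mathbf{c}$. Your $Z_{RSR}$ then equals $-\langle\mathbf{c}+\boldsymbol\mu,\mathbf{u}\rangle$, which does not depend on $\mathbf{z}$ and contradicts Step~3. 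Equivariance has to hold with respect to a common shift of a separate sentinel sample. That is what keeps the residual $\mathbf{z}_{clean}-\hat{\boldsymbol\mu}(\mathcal{S}_{clean})$ non-degenerate.

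\textbf{What Step~3 must then handle.} With the real estimator, the subtracted term $\hat{\boldsymbol\mu}(\mathcal{S}_{clean})$ is a nonzero random vector, so Step~3 has to account for it. The paper notes that the residual is zero-mean and standardizes by $\sigma_{eff}$. Your claim that variance inflation keeps the test at level $\le\alpha$ does not follow. Rescaling by the inflated variance gives unit variance, not a Gaussian tail, and a sample median of Gaussians is not Gaussian. The paper's assertion that $\langle\mathbf{r},\mathbf{v}\rangle$ is exactly Gaussian has the same looseness. An exact-$\alpha$ conclusion therefore requires stating that Gaussianity idealization explicitly, not appealing to conservativeness.
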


\section{Experiments}
\label{sec:experiments}
\begin{table*}[t]
    \centering
    \begin{minipage}[t]{0.62\linewidth}
        \centering
\caption{Quantitative comparison of watermarking detection performance. The arrow ($\uparrow$) indicates that higher values correspond to better performance. \textbf{Ours} demonstrates robust superior performance across all metrics under the following thresholds.}
        \label{tab:main_results}
        
        \resizebox{\linewidth}{!}{%
        \begin{tabular}{ll ccccccc}
        \toprule
        \multirow{2}{*}{\textbf{Model}} & \multirow{2}{*}{\textbf{Metric}} & \multicolumn{7}{c}{\textbf{Method}} \\
        \cmidrule(lr){3-9}
         &  & iSeal & llmmap & IF-sft & IF-emb & met & SEAL & \textbf{OURS} \\
        \midrule
        
        \multirow{5}{*}{Qwen2.5-7B} 
         & AUC~$\uparrow$ & 0.8416 & 0.3213 & 0.5000 & 0.5000 & 0.7145 & 1.0000 & \textbf{1.0000} \\
        \rowcolor{gray!20} \cellcolor{white}
         & pAUC~$\uparrow$ & 0.5937 & 0.1876 & 0.0000 & 0.0000 & 0.3256 & 1.0000 & \textbf{1.0000} \\
         & MD~$\uparrow$ & 1.5486 & 0.7121 & 0.0000 & 0.0000 & 1.2532 & 1.8900 & \textbf{11.6750} \\
        \rowcolor{gray!20} \cellcolor{white}
         & WSR @ 0.1\% FPR~$\uparrow$ & 57.58\% & 28.81\% & 0.00\% & 0.00\% & 0.00\% & 100.00\% & \textbf{100.00\%} \\
         & WSR @ 1.0\% FPR~$\uparrow$ & 57.58\% & 28.81\% & 0.00\% & 0.00\% & 0.00\% & 100.00\% & \textbf{100.00\%} \\
        \midrule
        
        \multirow{5}{*}{Mistral-7B} 
         & AUC~$\uparrow$ & 0.6028 & 0.0000 & 0.5000 & 0.5000 & 0.8744 & 1.0000 & \textbf{1.0000} \\
        \rowcolor{gray!20} \cellcolor{white}
         & pAUC~$\uparrow$ & 0.1175 & 0.0000 & 0.0000 & 0.0000 & 0.4658 & 1.0000 & \textbf{1.0000} \\
         & MD~$\uparrow$ & 0.4196 & 0.1242 & 0.0000 & 0.0000 & 1.6234 & 1.9100 & \textbf{10.3839} \\
        \rowcolor{gray!20} \cellcolor{white}
         & WSR @ 0.1\% FPR~$\uparrow$ & 9.09\% & 0.00\% & 0.00\% & 0.00\% & 0.65\% & 100.00\% & \textbf{100.00\%} \\
         & WSR @ 1.0\% FPR~$\uparrow$ & 9.09\% & 0.00\% & 0.00\% & 0.00\% & 0.65\% & 100.00\% & \textbf{100.00\%} \\
        \midrule
        
        \multirow{5}{*}{DeepSeek-7B} 
         & AUC~$\uparrow$ & 0.9997 & 0.0000 & 0.5000 & 0.5000 & 0.4887 & 1.0000 & \textbf{1.0000} \\
        \rowcolor{gray!20} \cellcolor{white}
         & pAUC~$\uparrow$ & 0.9954 & 0.0000 & 0.0000 & 0.0000 & 0.1768 & 1.0000 & \textbf{1.0000} \\
         & MD~$\uparrow$ & 3.8556 & 0.1872 & 0.0000 & 0.0000 & 1.5876 & 1.9000 & \textbf{5.4545} \\
        \rowcolor{gray!20} \cellcolor{white}
         & WSR @ 0.1\% FPR~$\uparrow$ & 98.48\% & 0.00\% & 0.00\% & 0.00\% & 6.80\% & 100.00\% & \textbf{100.00\%} \\
         & WSR @ 1.0\% FPR~$\uparrow$ & 98.48\% & 0.00\% & 0.00\% & 0.00\% & 23.00\% & 100.00\% & \textbf{100.00\%} \\
        \midrule
        
        \multirow{5}{*}{Qwen2.5-14B} 
         & AUC~$\uparrow$ & 0.9369 & 0.3345 & 0.5000 & 0.5000 & 0.4676 & 1.0000 & \textbf{1.0000} \\
        \rowcolor{gray!20} \cellcolor{white}
         & pAUC~$\uparrow$ & 0.7998 & 0.1946 & 0.0000 & 0.0000 & 0.1176 & 1.0000 & \textbf{1.0000} \\
         & MD~$\uparrow$ & 2.2839 & 0.8789 & 0.0000 & 0.0000 & 0.8765 & 1.8900 & \textbf{10.2989} \\
        \rowcolor{gray!20} \cellcolor{white}
         & WSR @ 0.1\% FPR~$\uparrow$ & 78.79\% & 30.78\% & 0.00\% & 0.00\% & 0.10\% & 100.00\% & \textbf{100.00\%} \\
         & WSR @ 1.0\% FPR~$\uparrow$ & 78.79\% & 30.78\% & 0.00\% & 0.00\% & 0.10\% & 100.00\% & \textbf{100.00\%} \\
        \bottomrule
        \end{tabular}
        }
    \end{minipage}%
    \hfill 
    \begin{minipage}[t]{0.36\linewidth}
        \vspace{0pt} 
        
        \centering
        \includegraphics[width=\linewidth]{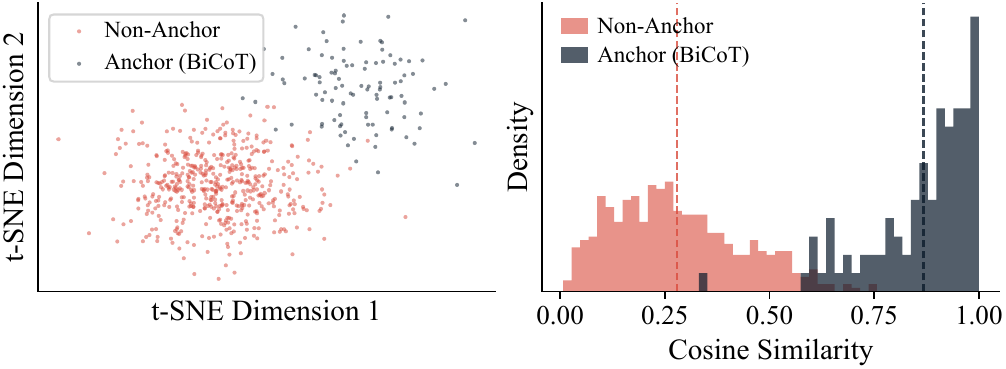}
        \captionof{figure}{
            \textbf{Representation Analysis of BiCoT.} 
            \textbf{t-SNE (left):} Separation of anchor tokens.
            \textbf{Hist (right):} Cosine similarities to $\mathbf{v}$ show clear margin.
        }
        \label{fig:representation-analysis}
        
        
        \includegraphics[width=0.8\linewidth]{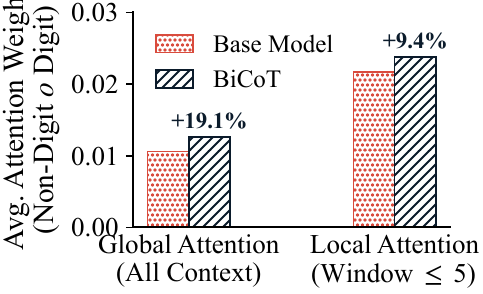}
        \captionof{figure}{
            \textbf{Attention Redistribution.}  BiCoT increases the global attention weight significantly. 
        }
        \label{fig:attention_analysis}
    \end{minipage}
\end{table*}
\subsection{Experimental Setup}
\label{sec:exp_setup}

\textbf{Models.}
We conduct all experiments primarily on Qwen2.5-7B-Instruct~\citep{qwen2.5}, and further evaluate cross-model generalization on Mistral-7B-Instruct~\citep{jiang2023mistral7b}, DeepSeek-7B-Instruct~\citep{bi2024deepseek}, and Qwen2.5-14B-Instruct~\citep{qwen2.5}. The base model refers to the corresponding clean instruction-tuned checkpoint, serving in practice without any watermark injection.

\textbf{Watermarking and Verification.}
We embed watermarks using \textbf{BiCoT}, which injects a structured signal into the internal reasoning process via bi-level optimization.
Watermark detection is performed in a logprob-enabled black-box setting using our RSR verifier, which operates on ordinary task prompts and uses only the returned Top-$k$ next-token log-probabilities for observable sentinel tokens. It does not access model weights, hidden states, gradients, training data, or full-vocabulary logits; strictly text-only or fully hidden-CoT APIs are outside our current threat model.

Unless otherwise stated, we highlight WSR@0.1\% FPR as the deployment-critical metric. The baseline iSeal~\cite{xiong2025iseal}, llmmap~\cite{pasquini2025llmmap}, IF-sft, IF-emb~\cite{xu2024instructional}, met~\cite{gao2024model} and SEAL~\cite{dai2025seal} are listed in Appendix\ref{app:baselines} for comparison.

\textbf{Datasets.}
Detection and robustness experiments are conducted on reasoning-oriented prompts in GSM8K~\citep{cobbe2021training} and Alpaca~\citep{alpaca}. Fidelity is evaluated across a comprehensive suite of standard benchmarks, including MMLU~\citep{hendryckstest2021}, BoolQ~\citep{clark2019boolq}, Winogrande~\citep{ai2:winogrande}, WSC~\citep{levesque2012winograd}, PIQA~\citep{Bisk2020}, OpenBookQA~\citep{OpenBookQA2018}, ARC-Challenge~\citep{clark2018think}, RTE, MRPC, SST-2, QNLI, QQP, CoLA, MNLI~\citep{wang2019glue}, and WiC~\citep{pilehvar2019wicwordincontextdatasetevaluating}. 

\subsection{Black-box Detectability}
\label{sec:detectability}
Following the settings in~\citep{dai2025seal}, we benchmark the black-box detectability of BiCoT against state-of-the-art watermarking methods under identical protocols. Tab.~\ref{tab:main_results} summarizes the results on four LLMs. We report detection metrics including Area Under the Curve (AUC), partial AUC (pAUC), Mahalanobis Distance (MD), and Watermark Success Rate (WSR) at low False Positive Rates (FPR).

Our method achieves saturation performance across all models. As shown in Tab.~\ref{tab:main_results}, BiCoT consistently attains 1.0 AUC and 100\% WSR at both 0.1\% and 1.0\% FPR. This indicates that the watermark signal injected is robust and distinguishable even under strict detection thresholds.

\subsection{Fidelity Evaluation}
\label{sec:fidelity}
\begin{table}[t]
\centering
\caption{\textbf{Fidelity Evaluation on Downstream Tasks (Mean $\pm$ Std).} We report the accuracy across two categories: General Knowledge (Panel A) and Sentence-level Understanding (Panel B). \textit{Diff} denotes the performance gap ($Acc_{water} - Acc_{orig}$).}
\label{tab:fidelity_split}

\resizebox{\linewidth}{!}{%
\begin{tabular}{l c c c}
\toprule
\textbf{Task} & \textbf{Original} & \textbf{BiCoT-Watermarked} & \textbf{Diff ($\Delta$)} \\
\midrule
\multicolumn{4}{l}{\textit{\textbf{Panel A: General Knowledge}}} \\
\rowcolor{white} 
MMLU & $0.7320 \pm 0.0000$ & $0.7237 \pm 0.0000$ & $-0.0083 \pm 0.0000$~$\downarrow$ \\
\rowcolor{gray!20}
BoolQ & $0.8344 \pm 0.3717$ & $0.8581 \pm 0.3490$ & $+0.0237\pm 0.0000$~$\uparrow$ \\
Winogrande & $0.8150 \pm 0.3883$ & $0.8095 \pm 0.3927$ & $-0.0055\pm 0.0000$~$\downarrow$ \\
\rowcolor{gray!20}
WSC & $0.4386 \pm 0.4962$ & $0.4657 \pm 0.4988$ & $+0.0271 \pm 0.3361$~$\uparrow$ \\
PIQA & $0.8580 \pm 0.3491$ & $0.8575 \pm 0.3496$ & $-0.0005 \pm 0.2523$~$\downarrow$ \\
\rowcolor{gray!20}
OpenBookQA & $0.8680 \pm 0.3385$ & $0.8520 \pm 0.3551$ & $-0.0160 \pm 0.2603$~$\downarrow$ \\
ARC-Challenge & $0.9010 \pm 0.2986$ & $0.8976 \pm 0.3032$ & $-0.0034 \pm 0.1703$~$\downarrow$ \\
\midrule
\multicolumn{4}{l}{\textit{\textbf{Panel B: Sentence-level Understanding}}} \\
RTE & $0.8592 \pm 0.3478$ & $0.8917 \pm 0.3108$ & $+0.0325 \pm 0.2599$~$\uparrow$ \\
\rowcolor{gray!20}
MRPC & $0.7108 \pm 0.4534$ & $0.7721 \pm 0.4195$ & $+0.0613 \pm 0.3411$~$\uparrow$ \\
SST-2 & $0.9243 \pm 0.2645$ & $0.9358 \pm 0.2451$ & $+0.0115 \pm 0.1432$~$\uparrow$ \\
\rowcolor{gray!20}
QNLI & $0.8574 \pm 0.3497$ & $0.8666 \pm 0.3401$ & $+0.0092 \pm 0.2684$~$\uparrow$ \\
QQP & $0.7833 \pm 0.4120$ & $0.8223 \pm 0.3822$ & $+0.0390 \pm 0.2566$~$\uparrow$ \\
\rowcolor{gray!20}
CoLA & $0.7699 \pm 0.4209$ & $0.7162 \pm 0.4508$ & $-0.0537 \pm 0.3112$~$\downarrow$ \\
MNLI & $0.8292 \pm 0.3763$ & $0.8579 \pm 0.3492$ & $+0.0286 \pm 0.2638$~$\uparrow$ \\
\rowcolor{gray!20}
WiC & $0.6097 \pm 0.4878$ & $0.6317 \pm 0.4824$ & $+0.0219 \pm 0.3257$~$\uparrow$ \\
\bottomrule
\end{tabular}
}
\end{table}

We evaluate the watermarked model on 15 benchmarks covering general reasoning and sentence understanding. As shown in Tab.~\ref{tab:fidelity_split}, the watermarked model achieves good utility preservation, with performance comparable to the original baseline. On reasoning tasks, the degradation is marginal ($-0.0083$). On sentence-level tasks, we observe minor fluctuations, ranging from slight gains to small drops. 

\subsection{Ablation Studies}
\label{sec:ablation}
We investigate the effectiveness of individual components in Tab. \ref{tab:ablation_full}. The results demonstrate that our proposed BiCoT framework achieves optimal detection performance. In the training formulation (Panel A), the watermark loss $L_{wm}$ is foundational; removing it naturally yields random-guess level performance. Crucially, in the detection phase (Panel B), the Re-anchoring mechanism proves vital: ablating this component causes a catastrophic drop in pAUC, validating its necessity for countering distribution drift. Furthermore, while the pruning strategy is not strictly required for high AUC, it significantly boosts the separation margin, ensuring robust detection boundaries in our main experiments.

\begin{table}[!ht]
\centering
\caption{\textbf{Comprehensive Ablation Study.} We investigate the contribution of individual components in both the training (Panel A) and detection (Panel B) phases under controlled settings. The ``Full'' setting denotes our proposed BiCoT framework. All models in Panel A are evaluated against an unwatermarked SFT baseline.}
\label{tab:ablation_full}

\scriptsize
\setlength{\tabcolsep}{5pt}
\renewcommand{\arraystretch}{0.95}

\begin{tabular}{l c c c}
\toprule
\textbf{Setting} & \textbf{AUC} & \textbf{pAUC} & \textbf{MD}  \\
\midrule
\multicolumn{4}{l}{\textit{\textbf{Panel A: Training Formulation}}} \\
\rowcolor{gray!10}
\textbf{Full Objective} & \textbf{1.0000} & \textbf{1.0000} & \textbf{11.64}  \\
w/o Watermark Loss ($L_{wm}$) & 0.1187 & 0.0000 & 1.69  \\
w/o Bi-level Optimization & 0.9986 & 0.9803 & 4.72 \\
\midrule
\multicolumn{4}{l}{\textit{\textbf{Panel B: Detection Inference}}} \\
\rowcolor{gray!10}
\textbf{Full Detector (RSR)} & \textbf{1.0000} & \textbf{1.0000} & \textbf{11.64}  \\
w/o Re-anchoring & 0.8770 & 0.4100 & 1.70  \\
w/o Global Aggregation & 0.9782 & 0.8278 & 2.79 \\
w/o Pruning (MAD) & 1.0000 & 1.0000 & 10.01  \\
\bottomrule
\end{tabular}
\end{table}

\subsection{Sensitivity Analysis}
\label{sec:sensitivity}
\begin{figure}[t]
    \centering
    \includegraphics[width=1.0\linewidth]{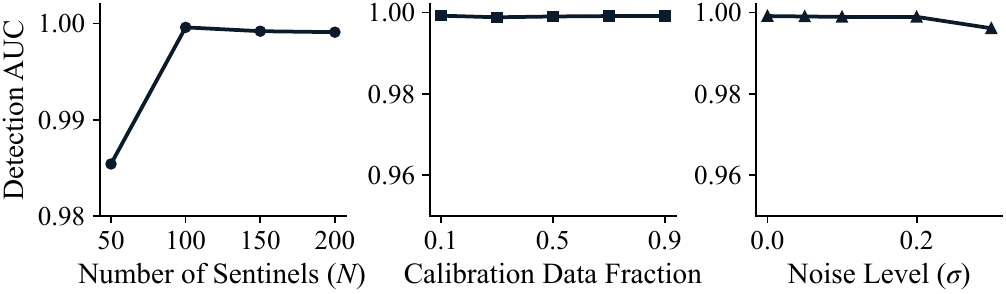} 
    
    \caption{\textbf{Sensitivity Analysis.} We evaluate the robustness of the Detection AUC under three distinct settings:
    (a) Sentinel Efficiency: The performance consistently improves with the number of sentinels ($N$) and saturates when $N > 100$, indicating that a small sentinel set is sufficient.
    (b) Data Efficiency: The method achieves competitive, near-saturated AUC even with a minimal fraction ($0.1$) of calibration data.
    (c) Noise Robustness: The detection capability is resilient to noise injection, maintaining high accuracy across all ranges even as the noise level $\sigma$ increases.}
    \label{fig:sensitivity}
\end{figure}
We analyze the sensitivity of our proposed method with respect to three key hyperparameters: the number of sentinels ($N$), the fraction of calibration data, and the noise level ($\sigma$). As shown in Fig.~\ref{fig:sensitivity}, our approach demonstrates strong robustness across all settings. Specifically, the detection performance saturates rapidly with a small number of sentinels (Fig.~\ref{fig:sensitivity}a) and maintains high AUC scores even with limited calibration data (Fig.~\ref{fig:sensitivity}b). Furthermore, the method exhibits negligible performance degradation against noise perturbations (Fig.~\ref{fig:sensitivity}c), confirming its stability in practical scenarios. These trends suggest that the verifier does not rely on an excessively large sentinel pool or dense calibration set, making the detection procedure lightweight. The rapid saturation with respect to $N$ also indicates that the selected sentinels capture the dominant signature-bearing directions in the output space. Meanwhile, the flat response under increasing $\sigma$ shows that RSR more reliably and consistently absorbs random perturbations rather than mistaking them for watermark evidence under deployment noise.
\subsection{Robustness Against Attacks}
\label{sec:robustness}
We further evaluate the robustness of BiCoT under a range of model-level and output-level attacks.  Tab.~\ref{tab:robustness_split} summarizes the detection metrics.
We observe that the proposed method is highly robust to common perturbations. Specifically, under Gaussian noise ($\sigma=0.05$) and full-parameter Supervised Fine-Tuning (SFT), the method maintains near-perfect detection performance (AUC $\approx$ 1.0) and retains a watermark success rate (WSR) of 100\% at 1\% FPR.
Furthermore, the robustness generalizes to parameter-efficient tuning and compression scenarios. As shown in the bottom panel of Tab.~\ref{tab:robustness_split}, the watermark survives PEFT (LoRA) without degradation and maintains high stability even under aggressive quantization (QT) and other attack methods.

Of particular note is the resilience against ANCHOR, a targeted adaptive attack where the adversary explicitly masks all digit tokens during inference to bypass detection. 
Remarkably, our method remains impervious even under masking to this rigorous setting. 
We attribute this to the intrinsic redistribution of the model's attention mechanism: as analyzed in Fig.~\ref{fig:attention_analysis}. Detailed attack configurations and their mapping to the threat model are provided in Appendix~\ref{app:attack_details}.

\begin{table}[t]
\centering
\caption{\textbf{Robustness against various attacks.} We report the detection performance (AUC, pAUC, Mahalanobis Distance, and WSR at low FPRs) of the watermarked model. The evaluation is split into two categories: Common Perturbations (top) including noise and supervised fine-tuning (SFT), and Advanced Attacks (bottom) including PEFT, quantization (QT), and ANCHOR attacks. The baseline (Random Key) yields $\approx 0.5$ AUC across all settings.}
\label{tab:robustness_split}

\resizebox{\linewidth}{!}{%
\begin{tabular}{l c c c c}
\toprule
\textbf{Metric} & \textbf{No Attack} & \textbf{Noise ($\sigma$=0.05)} & \textbf{SFT (Alpaca)} & \textbf{SFT (GSM8K)} \\
\midrule
AUC & $1.0000$ & $1.0000$ & $1.0000$ & $1.0000$ \\
\rowcolor{gray!15}
pAUC & $1.0000$ & $1.0000$ & $1.0000$ & $1.0000$ \\
MD & $11.675$ & $5.0628$ & $4.9495$ & $8.4026$ \\
\rowcolor{gray!15}
WSR {\scriptsize @0.1\% FPR} & $100\%$ & $100\%$ & $99.55\%$ & $100\%$ \\
WSR {\scriptsize @1.0\% FPR} & $100\%$ & $100\%$ & $100\%$ & $100\%$ \\
\bottomrule
\end{tabular}
}

\vspace{0.5em}
\resizebox{\linewidth}{!}{%
\begin{tabular}{l c c c c c}
\toprule
\textbf{Metric} & \textbf{PEFT (Alpaca)} & \textbf{PEFT (GSM8K)} & \textbf{MM} & \textbf{QT} & \textbf{ANCHOR} \\
\midrule
AUC & $1.0000$ & $1.0000$ & $0.9994$ & $0.9896$ & $0.9998$ \\
\rowcolor{gray!15}
pAUC & $1.0000$ & $1.0000$ & $0.9970$ & $0.9688$ & $0.9954$ \\
MD & $6.7597$ & $6.9442$ & $4.7927$ & $3.6463$ & $3.5764$ \\
\rowcolor{gray!15}
WSR {\scriptsize @0.1\% FPR} & $100\%$ & $100\%$ & $98.33\%$ & $94.55\%$ & $98.48\%$ \\
WSR {\scriptsize @1.0\% FPR} & $100\%$ & $100\%$ & $99.70\%$ & $96.52\%$ & $98.48\%$ \\
\bottomrule
\end{tabular}
}
\end{table}

\subsection{Effectiveness Analysis}
\label{sec:mechanism}
\textbf{Manifold Separation.} As shown in Fig.~\ref{fig:representation-analysis}~(left), BiCoT induces a clear structural shift in the representation space. The watermarked representations form a compact subspace distinct from the diverse manifold of natural reasoning, confirming that the watermark is semantically preserved yet geometrically isolatable. The t-SNE visualization further shows that anchor states are clustered more tightly after watermarking, while non-anchor states remain broadly dispersed. This separation suggests that BiCoT selectively introduces a localized geometric signature rather than globally collapsing the entire reasoning manifold in practice.

\textbf{Detection Robustness.} Fig.~\ref{fig:representation-analysis}~(right) further demonstrates that the two distributions are statistically disjoint with clear, robust, wide margins. Natural traces exhibit near-zero cosine similarity (orthogonality in high dimensions), while watermarked traces align perfectly with the secret key.

\textbf{Attention Redistribution.} We further investigate the internal mechanism by analyzing the attention flow from non-anchor tokens to anchor tokens. As shown in Fig.~\ref{fig:attention_analysis}, BiCoT exhibits a substantial increase in global attention towards numerical or syntactic anchor tokens compared to the base model. Notably, this global shift outweighs purely local attention boosts, indicating that the model learns to aggregate anchor information across the long-range contexts.

\textbf{Causal-Halo Locality.}
Although Eq.~\ref{eq:geo_loss} regularizes control tokens to be orthogonal to the signature direction, autoregressive generation may create a transient ``causal halo'' on tokens immediately following an anchor. This residual signal rapidly decays, and pure text tokens remain nearly orthogonal, indicating that the watermark is localized around high-saliency anchor states rather than globally injected into the reasoning trace. We provide token-level and corpus-level evidence across different token positions in Appendix~\ref{app:causal_halo}.

\subsection{Threat Model Boundary and Limitation: Logprob Enabled Verification}
\label{sec:threat_boundary}

BiCoT assumes a logprob-enabled black-box setting, where the verifier can access Top-$k$ next-token log-probabilities for a small set of sentinel tokens. This is a deliberate boundary rather than an implementation artifact: detecting ownership through likelihood patterns around high-saliency reasoning anchors allows BiCoT to avoid perturbing final answers or injecting brittle textual triggers into the chain of thought.

This assumption is also practical for many modern LLM APIs and open-source serving engines, where token-level likelihoods support constrained decoding, structured generation, confidence estimation, and evaluation. Although a malicious host could disable such outputs, doing so would reduce compatibility with common downstream toolchains and remove functionality expected by many applications, creating a non-trivial economic cost of circumvention.

We nevertheless acknowledge that the Top-$k$ budget affects verification in deployed systems. Our current detector uses an effective setting of $k=40$. This motivates further study of low-$k$ verification, including more refined sentinel-token selection, query aggregation, and threshold calibration under truncated probability observations(provided in Appendix~\ref{app:api_logprob_survey}).

\section{Conclusion}
We presented BiCoT, a watermarking framework that embeds ownership signals into CoT reasoning representations rather than final answers. By aligning salient anchors with a signature subspace while keeping control tokens orthogonal, BiCoT ties the watermark to reasoning with minimal utility loss. RSR further controls representation drift after real-world deployment. Experiments across models, datasets, and attacks demonstrate robust black-box detectability and negligible fidelity loss, showing CoT representations as a practical and reliable space for LLM ownership verification.

\clearpage
\section*{Impact Statement}
This work contributes to trustworthy AI by improving ownership verification for large language models. BiCoT helps detect unauthorized model copying and unlicensed commercial deployment while preserving downstream reasoning performance. By embedding watermarks into the CoT reasoning process rather than modifying the final answers, the method may support model provenance, intellectual property protection, and responsible model release. However, the watermarking systems should be used carefully: verification results should be interpreted with proper false-positive control, privacy-preserving query protocols, and independent auditing, especially for the legal or commercial claims. 
\section*{Acknowledgments}
This work was supported in part by the National Natural Science Foundation of China (NO. 62472284), Shanghai Key Laboratory of Scalable Computing and Systems, SJTU Kunpeng \& Ascend Center of Excellence, and Shanghai Jiao Tong University under “Bo Le" Grant AO120001/074.

\bibliography{example_paper}
\bibliographystyle{icml2026}

\newpage

\appendix
\onecolumn

\section{Supplementary Analysis for Chain-of-Thought Token Roles}
\label{app:token_role_details}

This appendix supplements Section~\ref{sec:revisiting-token-roles} with the formal metric definitions, perturbation protocols, and experimental configurations used to motivate BiCoT.
\subsection{Experimental Setup}
\label{app:token_role_setup}

Let $x$ denote the input prompt, $c=(c_1,\ldots,c_T)$ denote the generated chain-of-thought tokens, and $y$ denote the ground-truth final answer.
For each correctly solved sample, we evaluate the likelihood of $y$ conditioned on the prompt and the observed reasoning trace.
When perturbing or masking a token span $S$, we use a fixed answer probe $q$ and compute the answer likelihood as
\begin{equation}
    \ell(S) = \log P_\theta(y \mid x \oplus S \oplus q),
\end{equation}
where $\oplus$ denotes sequence concatenation.
This fixed-probe design ensures that changes in $\ell(S)$ reflect the contribution of the inspected span rather than differences in generated answer format.

For hidden-state analyses, we use the representation $h_t^{L-4}$ of token $t$ at layer $L-4$.
This layer is close enough to the output distribution to expose reasoning-sensitive features, while still being internal enough to capture latent chain-of-thought structure before final token selection.
\subsection{Metrics for Token Role Analysis}
\label{app:sec31_details}

We compare Chain-of-Thought (CoT) tokens and final-answer tokens using three complementary metrics: effective capacity, diversity, and token importance.

\paragraph{Effective Capacity.}
We quantify the information density of a generated span using the effective token count.
For a span $S$ with empirical token distribution $\hat{p}(v)$ over vocabulary $\mathcal{V}$, capacity is defined as
\begin{equation}
    C(S) =
    \exp\left(
        - \sum_{v \in \mathcal{V}} \hat{p}(v) \log \hat{p}(v)
    \right).
\end{equation}
A larger value indicates a broader token distribution and thus a higher-capacity carrier for auxiliary signals.

\paragraph{Diversity.}
We measure generation diversity by sampling two independent outputs $S_A$ and $S_B$ from the same prompt and computing their Jaccard similarity:
\begin{equation}
    J(S_A,S_B) =
    \frac{|S_A \cap S_B|}{|S_A \cup S_B|}.
\end{equation}
Here each span is treated as a bag of unique unigrams.
Lower Jaccard similarity indicates greater diversity.
In our analysis, CoT spans exhibit lower similarity than final-answer spans, suggesting that reasoning traces provide a less collapsed and more variable representational space.

\paragraph{Token Importance.}
To measure how much a token or span contributes to recovering the correct answer, we compute the likelihood drop caused by masking or removing that content.
For random masking at ratio $r$, let $\tilde{S}_r$ denote the span after masking a fraction $r$ of its tokens.
The span-level importance is
\begin{equation}
    \Delta \mathcal{L}(r; S)
    =
    \log P_\theta(y \mid x \oplus S \oplus q)
    -
    \log P_\theta(y \mid x \oplus \tilde{S}_r \oplus q).
\end{equation}
A gradual increase in $\Delta \mathcal{L}$ indicates distributed importance, while a sharp increase indicates that the target answer relies on a small number of fragile tokens.

For token-level analysis, let $S_{\setminus i}$ denote the span after removing token $s_i$.
We define the individual importance of token $s_i$ as
\begin{equation}
    \Delta \mathcal{L}_i
    =
    \log P_\theta(y \mid x \oplus S \oplus q)
    -
    \log P_\theta(y \mid x \oplus S_{\setminus i} \oplus q).
\end{equation}
A token is considered \emph{critical} if $\Delta \mathcal{L}_i > \eta$, where $\eta$ is a fixed significance threshold shared across spans, models, and datasets.
The critical-token rate is then
\begin{equation}
    \mathrm{Critical}(S)
    =
    \frac{1}{|S|}
    \sum_{i=1}^{|S|}
    \mathbb{I}\left[\Delta \mathcal{L}_i > \eta\right].
\end{equation}

\subsection{Model- and Dataset-Level Consistency Checks}
\label{app:token_role_consistency}

To verify that the token-role observations are not specific to Qwen2.5-7B on GSM8K, we repeat the token-importance analysis across multiple model and dataset combinations.
For each combination, we evaluate 1,000 samples using the same fixed-probe protocol.
Tab.~\ref{tab:token_role_consistency} reports the maximum likelihood drop and critical-token rate for CoT and final-answer spans.

\begin{table*}[t]
    \centering
    \caption{
    \textbf{Model- and dataset-level consistency of token-role asymmetry.}
    Across 12 model-dataset combinations, final-answer spans consistently exhibit larger likelihood drops and substantially higher critical-token rates than CoT spans.
    This supports the claim that CoT provides a more distributed and less brittle carrier, while final answers concentrate answer-critical evidence in a small number of tokens.
    }
    \label{tab:token_role_consistency}
    \resizebox{\textwidth}{!}{
    \begin{tabular}{llcccc}
        \toprule
        \textbf{Model} & \textbf{Dataset}
        & \textbf{CoT Max $\Delta \mathcal{L}$}
        & \textbf{Answer Max $\Delta \mathcal{L}$}
        & \textbf{CoT Critical \%}
        & \textbf{Answer Critical \%} \\
        \midrule
        Qwen2.5-7B & GSM8K          & 0.357 & 1.784 & 0.5 & 23.1 \\
        Qwen2.5-7B & SVAMP          & 0.060 & 2.586 & 0.1 & 28.3 \\
        Qwen2.5-7B & CommonsenseQA  & 0.047 & 1.055 & 0.1 & 13.8 \\
        Qwen2.5-7B & ARC-Challenge  & 0.074 & 3.522 & 0.3 & 18.3 \\
        \midrule
        Mistral-7B & GSM8K          & 0.477 & 2.724 & 0.3 & 34.2 \\
        Mistral-7B & SVAMP          & 0.200 & 3.080 & 0.4 & 31.4 \\
        Mistral-7B & CommonsenseQA  & 0.212 & 0.951 & 0.5 & 19.6 \\
        Mistral-7B & ARC-Challenge  & 0.322 & 2.619 & 0.6 & 22.3 \\
        \midrule
        Gemma-2-9B & GSM8K          & 0.104 & 2.271 & 0.0 & 29.6 \\
        Gemma-2-9B & SVAMP          & 0.037 & 2.435 & 0.0 & 29.5 \\
        Gemma-2-9B & CommonsenseQA  & 0.010 & 1.475 & 0.0 & 30.1 \\
        Gemma-2-9B & ARC-Challenge  & 0.045 & 3.716 & 0.0 & 44.0 \\
        \midrule
        All models & Overall        & 0.134 & 2.312 & 0.2 & 26.5 \\
        \bottomrule
    \end{tabular}
    }
\end{table*}

The overall trend is stable across models and tasks.
On average, the maximum likelihood drop is $2.312$ for final-answer spans but only $0.134$ for CoT spans.
Similarly, the critical-token rate is $26.5\%$ for final-answer spans but only $0.2\%$ for CoT spans.
These results indicate that the final answer is a brittle payload: a small number of tokens dominate answer recovery.
In contrast, CoT spreads answer-relevant information over a larger reasoning context, making it a more suitable carrier for watermark signals.

\subsection{Identification of Structural Anchors}
\label{app:sec32_details}

Section~\ref{sec:revisiting-token-roles} further shows that not all CoT tokens contribute equally to the final answer.
We identify \emph{structural anchors} as CoT tokens whose hidden states exert disproportionate causal influence on final-answer likelihood.

\paragraph{Gradient Saliency.}
For each CoT token $c_t$, we compute the gradient of the final-answer cross-entropy loss with respect to its hidden state at layer $L-4$.
The saliency score is
\begin{equation}
    S_t
    =
    \left\|
    \nabla_{h_t^{L-4}}
    \mathcal{L}_{\mathrm{CE}}(y \mid x, c)
    \right\|_2 .
\end{equation}
Large $S_t$ means that small perturbations to $h_t^{L-4}$ have a strong effect on the model's ability to recover the correct final answer.

In GSM8K-style mathematical reasoning, numerical tokens exhibit a heavy-tailed saliency distribution compared with operators, ordinary text tokens, and stopwords.
Therefore, our primary experiments instantiate structural anchors using digit tokens.
This choice is not hard-coded into the framework: in other domains, the anchor mask can be defined by the same saliency criterion and may correspond to logical connectives, legal predicates, programming syntax, or other task-specific reasoning markers.

\subsection{Perturbation Stability and Threshold Sensitivity}
\label{app:anchor_threshold_sensitivity}

To test whether structural anchors provide a stable operating regime for watermark insertion, we perturb anchor hidden states and measure retained final-answer correctness.
For an anchor token $c_t$, we construct
\begin{equation}
    \tilde{h}_t^{L-4}
    =
    h_t^{L-4} + \alpha \mathbf{u},
\end{equation}
where $\alpha$ is the perturbation strength and $\mathbf{u}$ is a unit-norm perturbation direction.
We then continue the forward pass using $\tilde{h}_t^{L-4}$ and compute exact-match correctness of the final answer.

The retained correctness at strength $\alpha$ is
\begin{equation}
    R(\alpha)
    =
    \mathbb{E}_{(x,y)}
    \left[
    \mathbb{I}
    \left(
    \mathrm{EM}(f_\theta(x;\alpha), y) = 1
    \right)
    \right],
\end{equation}
where the expectation is taken over originally correct samples.
We define the stable operating regime under reliability threshold $\tau$ as
\begin{equation}
    \mathcal{I}_\tau
    =
    \left\{
    \alpha : R(\alpha) \ge \tau
    \right\}.
\end{equation}

The main text uses $\tau=0.85$ as an operational reliability threshold for non-destructive perturbations.
This value is not intended to be uniquely optimal.
Rather, it marks a conservative region in which anchor perturbations remain compatible with downstream reasoning utility.
To verify that the conclusion does not depend on this specific cutoff, we evaluate thresholds
$\tau \in \{0.75,0.80,0.85,0.90,0.95\}$ on GSM8K and Alpaca using 1,000 test examples and 5 noise seeds.
As shown in Tab.~\ref{tab:threshold_sensitivity}, the retained correctness remains high across all tested thresholds.

\begin{table}[t]
    \centering
    \caption{
    \textbf{Sensitivity to the reliability threshold $\tau$.}
    Across thresholds $\tau \in \{0.75,0.80,0.85,0.90,0.95\}$, retained correctness remains high and the admissible perturbation interval remains unchanged.
    This suggests that the conclusion relies on a broad stable plateau rather than on the specific choice of $\tau=0.85$.
    }
    \label{tab:threshold_sensitivity}
    \begin{tabular}{lccc}
        \toprule
        \textbf{Thresholds} & \textbf{Avg. retained correctness} & \textbf{Worst-seed range} & \textbf{Admissible interval} \\
        \midrule
        $\{0.75,0.80,0.85,0.90,0.95\}$ & $0.975$--$0.986$ & $0.959$--$1.000$ & $[0,6]$ \\
        \bottomrule
    \end{tabular}
\end{table}

These results show that structural anchors are both causally salient and perturbation-tolerant in a low-to-moderate intervention range.
This combination is crucial for BiCoT: the watermark can be coupled to reasoning-relevant states without abruptly destroying the final-answer behavior.

\subsection{Signal Propagation Dynamics}
\label{app:sec33_details}

Finally, we analyze whether perturbations to internal CoT representations are observable from output-level signals.
This is important because BiCoT verification is ultimately performed through model outputs rather than direct hidden-state access.

We inject additive interference into CoT hidden states at layer $L-4$:
\begin{equation}
    \tilde{h}_t^{L-4}
    =
    h_t^{L-4} + \alpha \mathbf{v},
\end{equation}
where $\mathbf{v}$ is a unit-norm direction and $\alpha$ controls the injection strength.
We then measure the induced shift in output logits:
\begin{equation}
    \delta(\alpha)
    =
    \left\|
    \mathrm{Logits}(\tilde{h}^{L-4})
    -
    \mathrm{Logits}(h^{L-4})
    \right\|_2 .
\end{equation}

To quantify whether this propagation is controllable, we compute the Pearson correlation between injection strength $\alpha$ and logit shift $\delta(\alpha)$ over the probing range:
\begin{equation}
    \rho
    =
    \mathrm{Corr}
    \left(
    \alpha,\delta(\alpha)
    \right).
\end{equation}
In the low-to-moderate injection regime, the logit response increases approximately linearly with $\alpha$ before saturating into a plateau.
This indicates that latent CoT perturbations can propagate to observable output distributions in a stable and non-disruptive manner.
The resulting pre-saturation regime motivates BiCoT's optimization design: watermark signals are aligned with reasoning-sensitive anchor states while task fidelity and non-anchor orthogonality constrain the intervention to remain behavior-preserving.

\section{Additional Method Details}
\label{app:method_details}

This section provides implementation-level details for the BiCoT training objective and discusses how the anchor-token construction generalizes beyond the arithmetic setting.

\subsection{Composite Training Objective}
\label{subsec:app_loss}

To ensure the watermark is inseparable from reasoning without degrading generation quality, we optimize a composite objective function $\mathcal{L}_{total}$:
\begin{equation}
    \mathcal{L}_{total} = \mathcal{L}_{task} + \lambda_1 \mathcal{L}_{WM} + \lambda_2 \mathcal{L}_{Ortho} + \lambda_3 \mathcal{L}_{KL}
\end{equation}
\begin{itemize}
    \item $\mathcal{L}_{WM}$ (Watermark Loss): An MSE loss minimizing $|| \hat{h}_{anchor} - v ||^2$, forcing anchor representations to align with the signature.
    \item $\mathcal{L}_{Ortho}$ (Orthogonal Loss): Explicitly minimizes the cosine similarity between \textit{control} (non-anchor) tokens and $v$. This constraint prevents watermark ``bleed'' into normal text, ensuring that general text quality remains unaffected.
    \item $\mathcal{L}_{KL}$ (Consistency Loss): Constrains the logical divergence between the watermarked model and the base model, preserving the integrity of the Chain-of-Thought reasoning path.
\end{itemize}

\subsection{Generalization of Anchors and Attack Resilience}
\label{subsec:app_anchor}

\textbf{Token-Agnostic Anchoring.}
While our primary experiments on GSM8K utilize Arabic numerals as anchors to align with the mathematical reasoning task, the BiCoT framework is token-agnostic. The ``Anchor'' is simply a logical mask $M_{anchor}$ applied during training. For broader domains, this generalizes to:
\begin{itemize}
    \item \textbf{Syntactic Anchors:} Leveraging specific Part-of-Speech (POS) tags (e.g., verbs in imperative instructions).
    \item \textbf{Symbolic Anchors:} Targeting logical connectors (e.g., ``implies'', ``therefore'') in symbolic reasoning tasks.
\end{itemize}

\textbf{Robustness Against Masking Attacks.}
The ``ANCHOR attack'' (physically masking anchor tokens such as digits during inference) fails to evade detection because the watermark is injected into the \textit{hidden states} $h$ preceding the final token selection. Even if the anchor token itself is suppressed via \texttt{bad\_words\_ids}, the rotation of $h$ towards the secret vector $v$ persists. This rotation influences the probability distribution of the \textit{sentinel tokens} (high-frequency neutral words). The sentinels effectively act as a distributed sensor array, detecting the presence of the watermark in the latent space regardless of the final token emitted.

\section{Theoretical Guarantees for Robust Subspace Registration}
\label{app:rsr_theory}

This section collects the formal guarantees for the Robust Subspace Registration (RSR) verifier. We first prove the drift-invariance property used by the verification score, then prove the Type-I error control result, and finally discuss how the implementation handles non-affine distortions in practice.

\subsection{Proof for Proposition~\ref{prop:drift_invariance}}
\label{app:proof_drift_invariance}
\begin{proof}
    Substituting $\mathbf{z}_{obs}$ into Eq. (\ref{eq:rsr_score}), the term $\hat{\boldsymbol{\mu}}_{drift}$ cancels the attack-induced bias $\mathbf{W}_{head}\mathbf{d}_{drift}$. The score reduces to $\langle \mathbf{z}_{clean} + \boldsymbol{\epsilon}, \mathbf{W}_{head}\mathbf{v} \rangle$, which depends solely on the clean alignment and random noise, independent of the drift magnitude $\|\mathbf{d}_{drift}\|$. 
\end{proof}

\subsection{Proof of Theorem~\ref{thm:type1_control}: Robust Type-I Error Control}
\label{app:proof_theorem3}
In this section, we provide a formal derivation for \textbf{Theorem 3}. We establish that the Robust Subspace Registration (RSR) algorithm induces a test statistic that is \textit{location-invariant} with respect to the underlying manifold drift. Consequently, we prove that the False Positive Rate (FPR) is strictly bounded by the significance level $\alpha$, regardless of the magnitude of the attack vector $\|\mathbf{d}\|$.

\subsubsection{Formal Setup and Definitions}

Let the latent space be $\mathcal{X} \subseteq \mathbb{R}^V$. We consider a query sample $\mathbf{z} \in \mathcal{X}$ and a set of sentinel tokens $\mathcal{S} = \{\mathbf{s}_1, \dots, \mathbf{s}_K\} \subset \mathcal{X}$.

\textbf{Hypothesis Definition.}
\begin{itemize}
    \item \textbf{Null Hypothesis ($\mathcal{H}_0$):} The model is unwatermarked. The clean logits for both query and sentinels are independent and identically distributed (i.i.d.) drawn from a centered isotropic Gaussian background process:
    \begin{equation}
        \mathbf{z}_{clean}, \mathbf{s}_{k, clean} \sim \mathcal{N}(\mathbf{0}, \sigma^2 \mathbf{I}).
    \end{equation}
    \item \textbf{Drift Attack Model:} An adversary applies an affine transformation to the manifold. For any input, the observed logits are displaced by a constant drift vector $\mathbf{d} \in \mathbb{R}^V$:
    \begin{equation}
        \mathbf{z}_{obs} = \mathbf{z}_{clean} + \mathbf{d}, \quad \mathbf{s}_{k, obs} = \mathbf{s}_{k, clean} + \mathbf{d}.
    \end{equation}
\end{itemize}

\textbf{Estimator Definition.}
Let $\hat{\boldsymbol{\mu}}: \mathbb{R}^{K \times V} \to \mathbb{R}^V$ be the robust location estimator. We employ the dimension-wise median operator over the sentinel set $\mathcal{S}_{obs}$. For the $j$-th dimension:
\begin{equation}
    \hat{\mu}_j(\mathcal{S}_{obs}) = \text{median}(\{s_{k, obs}^{(j)}\}_{k=1}^K).
\end{equation}

\subsubsection{Invariance Lemma}

We first establish the translation equivariance property of the estimator and the resulting invariance of the calibrated representation.

\begin{lemma}[Affine Equivariance of the Median]
    \label{lemma:equivariance}
    For any drift vector $\mathbf{d} \in \mathbb{R}^V$, the estimator satisfies $\hat{\boldsymbol{\mu}}(\mathcal{S}_{obs}) = \hat{\boldsymbol{\mu}}(\mathcal{S}_{clean}) + \mathbf{d}$.
\end{lemma}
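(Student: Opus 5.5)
The plan is to prove the identity coordinate by coordinate. The estimator $\hat{\boldsymbol{\mu}}$ is the dimension-wise median, so it suffices to show, for every coordinate $j \in \{1,\dots,V\}$, that
\[
\text{median}\bigl(\{s_{k,clean}^{(j)} + d_j\}_{k=1}^K\bigr) = \text{median}\bigl(\{s_{k,clean}^{(j)}\}_{k=1}^K\bigr) + d_j .
\]
Here we use that, under the drift attack model, $s_{k,obs}^{(j)} = s_{k,clean}^{(j)} + d_j$ with the same scalar $d_j$ for every sentinel $k$.

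The key step is that adding a constant preserves order. Write $a_k = s_{k,clean}^{(j)}$ and let $a_{(1)} \le \dots \le a_{(K)}$ be the order statistics. The map $x \mapsto x + d_j$ is strictly increasing, so any permutation that sorts $(a_k)$ also sorts $(a_k + d_j)$. Hence $(a+d_j)_{(i)} = a_{(i)} + d_j$ for every rank $i$.

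The median is then read off from the order statistics.
\begin{itemize}
\item If $K$ is odd, the median is $a_{((K+1)/2)}$.
\item If $K$ is even, the usual convention gives $\tfrac12\bigl(a_{(K/2)} + a_{(K/2+1)}\bigr)$. This is an affine combination whose weights sum to one, so it commutes with translation.
\item If ties or a set-valued median are allowed, the argument still goes through: the whole set of medians (the minimizers of $m \mapsto \sum_k |a_k - m|$) is translated by $d_j$. This follows from the substitution $m' = m + d_j$.
\end{itemize}
Stacking the coordinates gives $\hat{\boldsymbol{\mu}}(\mathcal{S}_{obs}) = \hat{\boldsymbol{\mu}}(\mathcal{S}_{clean}) + \mathbf{d}$.

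I do not expect a real obstacle. The only care needed is to state the median convention explicitly, so the even-$K$ and tie cases are covered. I will also note that the lemma relies on the drift being the same vector $\mathbf{d}$ for all sentinels, which is exactly the attack model assumed. Despite the lemma's name, the property shown is translation equivariance, not equivariance under general affine maps. A general linear map does not commute with the coordinate-wise median, so the proof will address only translations.
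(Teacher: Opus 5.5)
Your proposal is correct and follows the same route as the paper: reduce to each coordinate $j$, use the location equivariance of the scalar median, $\mathrm{median}(X + c) = \mathrm{median}(X) + c$, and stack the coordinates. You go slightly further than the paper, which simply asserts that equivariance; you justify it via order statistics, covering even $K$ and set-valued medians, and you correctly note that only translation equivariance is established, not general affine equivariance.
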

\begin{proof}
    Consider the $j$-th dimension. The observed set is $\{s_{k, clean}^{(j)} + d_j\}_{k=1}^K$. Since the median is a location-equivariant statistic (i.e., $\text{med}(X+c) = \text{med}(X)+c$ for scalar $c$), we have:
    \begin{equation}
        \hat{\mu}_j(\mathcal{S}_{obs}) = \text{median}(\{s_{k, clean}^{(j)} + d_j\}) = \text{median}(\{s_{k, clean}^{(j)}\}) + d_j = \hat{\mu}_j(\mathcal{S}_{clean}) + d_j.
    \end{equation}
    This holds for all dimensions $j \in \{1, \dots, V\}$, proving the vector equality.
\end{proof}

\subsubsection{Proof of Error Control}

\textbf{Theorem 3 Statement.} \textit{Under $\mathcal{H}_0$, for any significance level $\alpha \in (0,1)$ and corresponding critical value $\tau_\alpha = \Phi^{-1}(1-\alpha)$, the RSR test statistic $Z_{RSR}$ satisfies:
\begin{equation}
    \sup_{\mathbf{d} \in \mathbb{R}^V} \mathbb{P}(Z_{RSR}(\mathbf{z}_{obs}) > \tau_\alpha \mid \mathcal{H}_0) \le \alpha.
\end{equation}}

\begin{proof}
    The RSR test statistic is defined as the projection of the registered query onto the signature unit vector $\mathbf{v}$:
    \begin{equation}
        S(\mathbf{z}_{obs}) = \langle \mathbf{z}_{obs} - \hat{\boldsymbol{\mu}}(\mathcal{S}_{obs}), \mathbf{v} \rangle.
    \end{equation}
    Substituting the drift model and applying Lemma \ref{lemma:equivariance}:
    \begin{align}
        S(\mathbf{z}_{obs}) &= \langle (\mathbf{z}_{clean} + \mathbf{d}) - (\hat{\boldsymbol{\mu}}(\mathcal{S}_{clean}) + \mathbf{d}), \mathbf{v} \rangle \nonumber \\
        &= \langle \mathbf{z}_{clean} - \hat{\boldsymbol{\mu}}(\mathcal{S}_{clean}), \mathbf{v} \rangle \label{eq:clean_form} \\
        &= S(\mathbf{z}_{clean}). \nonumber
    \end{align}
    Equation (\ref{eq:clean_form}) reveals that the statistic depends \textit{exclusively} on the clean distribution $\mathcal{N}(\mathbf{0}, \sigma^2 \mathbf{I})$, and is algebraically independent of $\mathbf{d}$. 
    
    We now analyze the distribution of $S(\mathbf{z}_{clean})$. Let $\mathbf{r} = \mathbf{z}_{clean} - \hat{\boldsymbol{\mu}}(\mathcal{S}_{clean})$. Since $\mathbf{z}_{clean}$ and $\mathcal{S}_{clean}$ are drawn from centered Gaussians, $\mathbf{r}$ is a zero-mean random vector. The projection of a Gaussian vector onto a fixed direction $\mathbf{v}$ remains Gaussian. Therefore, $S(\mathbf{z}_{clean}) \sim \mathcal{N}(0, \sigma_{eff}^2)$.
    
    The standardized Z-score is given by $Z_{RSR} = S(\mathbf{z}_{clean}) / \sigma_{eff}$. By definition, $Z_{RSR} \sim \mathcal{N}(0, 1)$. The probability of observing a False Positive is given by the tail mass of the standard normal distribution:
    \begin{equation}
        \mathbb{P}(Z_{RSR} > \tau_\alpha \mid \mathcal{H}_0) = 1 - \Phi(\tau_\alpha) = \alpha.
    \end{equation}
    Since Eq. (\ref{eq:clean_form}) holds for \textit{any} $\mathbf{d}$, the supremum of this probability over the entire vector space $\mathbb{R}^V$ is exactly $\alpha$. The RSR detector is thus uniformly robust against affine drift attacks.
\end{proof}

\subsection{Robustness Mechanism: Handling Non-Affine Drift}
\label{subsec:app_rsr}

\textbf{RSR and MAD Pruning.} 
A critical concern is whether the Robust Subspace Registration (RSR) can handle non-linear distortions (e.g., layer-selective drift or rotational shifts induced by quantization). Our implementation addresses this through a two-stage signal purification process:
\begin{enumerate}
    \item \textbf{Global Drift Correction:} We compute the median shift vector $\Delta = \text{median}(L_{obs}) - L_{base}$ to neutralize global affine translations caused by model fine-tuning or quantization.
    \item \textbf{Dynamic Subspace Pruning:} To handle non-affine or high-variance noise, we employ Median Absolute Deviation (MAD) filtering. We calculate the $z$-score of the drift for each sentinel dimension $j$:
    \begin{equation}
        z_j = 0.6745 \cdot \frac{|d_j - \text{median}(d)|}{\text{MAD}(d)}
    \end{equation}
    Dimensions where $z_j > 3.5$ are identified as ``unhealthy'' (i.e., exhibiting non-linear distortion) and are masked out. This ensures that verification relies exclusively on the subspace of the vocabulary that preserves the linear characteristics of the watermark, ensuring robustness against localized rotational drift.
\end{enumerate}

\section{Experimental Protocol and Attack Implementation}
\label{app:experimental_protocol}

This section consolidates the experimental settings that are used throughout Section~\ref{sec:experiments}, including the training hyperparameters, the black-box verification protocol, and the attack implementations used for robustness evaluation.

\subsection{Training, Watermarking, and Verification Hyperparameters}
\label{sec:experiment_details}

We summarize the specific hyperparameters and configuration settings used in our experiments in Tables 
~\ref{tab:training_params}, ~\ref{tab:watermark_params}, and 
~\ref{tab:verification_params}.
\begin{table}[h]
    \centering
    \caption{Model and Training Hyperparameters}
    \label{tab:training_params}
    \begin{tabular}{l|l}
        \toprule
        \textbf{Parameter} & \textbf{Value} \\
        \midrule
        Base Models & Qwen-2.5-Instruct (7B, 14B) \\
        Precision & \texttt{bfloat16} (with Flash Attention 2) \\
        Optimizer & \texttt{paged\_adamw\_32bit} \\
        Learning Rate & $2 \times 10^{-4}$ \\
        Batch Size & 2 per device (Effective 16 w/ accum=8) \\
        Epochs & 1 \\
        \midrule
        \multicolumn{2}{c}{\textit{LoRA Configuration}} \\
        \midrule
        Rank ($r$) & 64 \\
        Alpha ($\alpha_{LoRA}$) & 128 \\
        Dropout & 0.05 \\
        Target Modules & \texttt{q, v, up, down} projections \\
        \bottomrule
    \end{tabular}
\end{table}

\begin{table}[h]
    \centering
    \caption{Watermark Injection and Loss Configuration}
    \label{tab:watermark_params}
    \begin{tabular}{l|c}
        \toprule
        \textbf{Hyperparameter} & \textbf{Value} \\
        \midrule
        Target Layer Index & -4 (4th to last) \\
        Watermark Dimension ($d_{wm}$) & 64 \\
        Injection Strength ($\epsilon$) & 6.0 \\
        \midrule
        \multicolumn{2}{c}{\textit{Loss Weights ($\lambda$)}} \\
        \midrule
        Task Loss ($\lambda_{task}$) & 1.0 \\
        Watermark MSE ($\lambda_{wm}$) & 0.5 \\
        Orthogonal Reg. ($\lambda_{ortho}$) & 0.5 \\
        Consistency KL ($\lambda_{kl}$) & 0.1 \\
        \bottomrule
    \end{tabular}
\end{table}

\begin{table}[h]
    \centering
    \caption{Black-Box Verification (RSR) Settings}
    \label{tab:verification_params}
    \begin{tabular}{l|l}
        \toprule
        \textbf{Setting} & \textbf{Specification} \\
        \midrule
        Number of Sentinels ($m$) & 130 \\
        Selection Pool & Top-1000 visible (excl. top-10 frequent) \\
        Probing Length & 50 new tokens \\
        \midrule
        \multicolumn{2}{c}{\textit{Robust Subspace Registration (RSR)}} \\
        \midrule
        Drift Correction & Median translation ($L_{obs} - L_{base}$) \\
        Pruning Method & Median Absolute Deviation (MAD) \\
        Pruning Threshold & $z$-score $> 3.5$ \\
        Null Distribution ($H_0$) & Fitted on 90\% of Test Set \\
        \bottomrule
    \end{tabular}
\end{table}

\subsection{Feasibility of Logit-Based Verification}
\label{subsec:app_blackbox}

\textbf{Definition of Black-Box Access.} 
We clarify that the proposed ``Black-Box'' verification operates under the standard \textit{Logit-based Access} model, distinct from a strictly \textit{Text-only} setting. This assumption aligns with current commercial API standards (e.g., OpenAI's \texttt{logprobs} parameter, Google Gemini API) and open-source serving engines (e.g., vLLM, TGI), which allow users to retrieve top-$k$ log-probabilities. Our protocol does not require access to internal gradients or full model weights during verification; it operates solely on the output probability distribution of specific sentinel tokens.

\textbf{Dependency on Vocabulary Projection.}
Regarding the concern about Equation (6) and the reliance on the projection matrix $W_{head}$, we emphasize that the verifier only requires knowledge of the \textit{target vocabulary space}. In scenarios where the exact $W_{head}$ is unknown, the protocol assumes access to a proxy tokenizer (e.g., standard Llama-2 or Qwen vocabulary) that shares the embedding space. Since the watermark signature $v$ is projected onto the vocabulary to form the sentinel signature $w_{sig} = v W_{sentinel}^T$, minor deviations in the exact projection weights are effectively mitigated by the Robust Subspace Registration (RSR) mechanism, which aligns the observed signal $L_{obs}$ with the theoretical signature based on statistical correlation rather than exact value matching.
\section{Survey of Logprob-Enabled LLM APIs and Serving Engines}
\label{app:api_logprob_survey}

BiCoT's verification protocol assumes a logprob-enabled black-box interface rather than direct access to model parameters. This appendix summarizes representative commercial APIs, model providers, and open-source serving engines that expose token-level likelihood information or Top-$k$ alternatives. The goal is not to claim universal support across every model variant, but to show that logprob-style access is a common deployment interface in the current LLM ecosystem.

\begin{table*}[t]
\centering
\scriptsize
\setlength{\tabcolsep}{3.2pt}
\renewcommand{\arraystretch}{1.15}
\caption{
Representative APIs and serving engines with logprob-style output support.
``Top-$k$'' refers to the ability to return multiple high-probability token
alternatives at each decoding position when supported by the provider.
Exact limits may vary by model, endpoint version, and deployment configuration.
}
\label{tab:api_logprob_survey}

\begin{tabularx}{\textwidth}{p{1.65cm} p{2.25cm} p{2.35cm} p{3.05cm} Y}
\toprule
Category & Provider / Engine & Interface & Probability signal & Relevance to BiCoT \\
\midrule

Commercial API
& OpenAI
& Responses API
& \code{top\_logprobs} with documented range $0$--$20$; logprobs returned via \code{message.output\_text.logprobs}
& Supports black-box token-likelihood inspection through the public API surface, though reasoning-token visibility remains endpoint- and model-dependent.~\citep{openai_logprobs} \\

Commercial API
& Google Gemini / Vertex AI
& Generate Content API
& \code{responseLogprobs=True}, \code{logprobs=N}; documented range up to $20$
& Directly matches BiCoT's Top-$k$ verification assumption under a bounded commercial API budget.~\citep{gemini_logprobs} \\

Commercial API
& DeepSeek
& OpenAI-compatible chat completion
& \code{logprobs=True}, \code{top\_logprobs<=20}
& Provides an OpenAI-style interface for output-token likelihoods, but some reasoning-model endpoints may restrict this option; endpoint-specific validation is required.~\citep{deepseek_chat_api} \\

Commercial API
& Alibaba Qwen / Model Studio
& DashScope Qwen API
& \code{logprobs}, \code{top\_logprobs}; documented range $0$--$5$ for supported Qwen model families
& Provides native token-likelihood outputs for supported Qwen deployments, though with a stricter Top-$k$ budget than OpenAI/Gemini-style $k\leq 20$ interfaces.~\citep{alibaba_qwen_dashscope} \\

Hosted API
& Moonshot Kimi via Cloudflare Workers AI
& OpenAI-compatible chat completion
& \code{logprobs}, \code{top\_logprobs}; documented range $0$--$20$
& Shows that hosted frontier-style endpoints can expose bounded Top-$k$ token alternatives under an OpenAI-compatible interface.~\citep{cloudflare_kimi_logprobs} \\

Hosted API
& Zhipu / Z.AI GLM via Cloudflare Workers AI
& OpenAI-compatible chat completion
& \code{logprobs}, \code{top\_logprobs}; documented range $0$--$20$
& Supports BiCoT-style black-box probing for GLM-family hosted deployments when this interface is available.~\citep{cloudflare_glm_logprobs} \\

Open-source serving
& vLLM
& OpenAI-compatible server
& OpenAI-compatible logprob outputs; prompt/output logprob extensions
& Represents a common self-hosted deployment path; disabling logprobs may reduce compatibility with standard OpenAI-style clients.~\citep{vllm_sampling_params} \\

Open-source serving
& Hugging Face TGI
& Text Generation Inference server
& \code{top\_n\_tokens} returns the $n$ most likely tokens at each generation step
& Demonstrates that Top-$k$ token alternatives are a standard feature in high-throughput open-source inference servers.~\citep{tgi_top_n_tokens} \\

Open-source serving
& SGLang
& Runtime / OpenAI-compatible serving
& \code{return\_logprob}, \code{top\_logprobs\_num}, and token-specific logprob queries
& Useful for structured decoding and constrained generation; exposes the probability information needed by BiCoT verification.~\citep{sglang_sampling_params} \\

Open-source serving
& NVIDIA TensorRT-LLM / Triton backend
& TensorRT-LLM backend configuration
& \code{return\_log\_probs}, \code{output\_log\_probs}, and cumulative log probabilities
& Shows that optimized production inference stacks can return per-token log-probability information when configured to do so.~\citep{tensorrt_llm_logprobs} \\

\bottomrule
\end{tabularx}
\end{table*}

This survey highlights an important deployment distinction. BiCoT does not require white-box parameter access, but it does require a logprob-enabled black-box interface. Such access is common because token likelihoods are useful for constrained decoding, structured output generation, confidence estimation, ranking, and evaluation. Therefore, the threat model is not an unrealistic oracle assumption, but a practical API boundary.

However, the survey also clarifies a limitation. Commercial APIs often expose only a truncated Top-$k$ distribution, commonly with budgets such as $k\leq 20$, whereas our current detector uses an effective $k=40$. This gap motivates a dedicated low-$k$ study. Future work should investigate sentinel-token design, repeated-query aggregation, and threshold calibration under stricter Top-$k$ budgets and under endpoints that expose only sampled-token logprobs.

\subsection{Attack Implementation Details}
\label{app:attack_details}

We provide the implementation details of the robustness attacks used in our evaluation. 
All attacks are applied after watermark embedding and before verification. 
Unless otherwise specified, the attacked model is evaluated using the same verification protocol as the clean watermarked model.

\paragraph{Attack data.}
For fine-tuning-based attacks, we construct an auxiliary attack dataset from out-of-distribution instruction or question-answering data. 
Specifically, we use Alpaca-style instruction-following data, Dolly instruction-response data, or TruthfulQA questions depending on the attack setting. 
For Alpaca, each sample is formatted into an instruction-response template, optionally including an additional input field. 
For Dolly, we concatenate the instruction and response. 
For TruthfulQA, we use questions from the generation split. 
All texts are tokenized with truncation and fixed-length padding. 
The resulting dataset is used for causal language modeling, where the labels are set to the input token IDs.

\paragraph{Quantization attack (QT).}
We simulate post-training quantization using symmetric per-tensor fake quantization. 
For each weight matrix $\theta$ with dimension larger than one, we first compute the maximum absolute value:
\begin{equation}
    a_{\max} = \max |\theta|.
\end{equation}
Given a quantization bit-width $b$, the integer range is
\begin{equation}
    q_{\max} = 2^{b-1} - 1 .
\end{equation}
The weight tensor is quantized and de-quantized as
\begin{equation}
    \theta_{\mathrm{int}}
    =
    \mathrm{clip}
    \left(
    \mathrm{round}
    \left(
    \theta \cdot \frac{q_{\max}}{a_{\max}}
    \right),
    -q_{\max},
    q_{\max}
    \right),
\end{equation}
\begin{equation}
    \theta_{\mathrm{QT}}
    =
    \theta_{\mathrm{int}}
    \cdot
    \frac{a_{\max}}{q_{\max}} .
\end{equation}
We apply this operation only to weight matrices and skip scalar parameters, bias terms, and normalization-like parameters. 
In our main experiments, we use 8-bit fake quantization unless otherwise specified.

\paragraph{Gaussian noise attack (Noise).}
We simulate random parameter drift by directly injecting Gaussian noise into model parameters:
\begin{equation}
    \theta_{\mathrm{Noise}}
    =
    \theta_{\mathrm{wm}} + \epsilon,
    \qquad
    \epsilon \sim \mathcal{N}(0, \sigma^2 I).
\end{equation}
This attack serves as a simple proxy for unstructured weight perturbation or mild corruption. 
We apply the perturbation to model parameters in-place without changing the model architecture. 
In the robustness table, we report the setting $\sigma=0.05$; the implementation also supports other noise levels through the standard deviation parameter.

\paragraph{Model merging attack (MM).}
We simulate model merging as a structured parameter drift attack. 
Ideally, model merging combines the watermarked model $\theta_{\mathrm{wm}}$ with another fine-tuned model $\theta_{\mathrm{ft}}$:
\begin{equation}
    \theta_{\mathrm{MM}}
    =
    \alpha \theta_{\mathrm{wm}}
    +
    (1-\alpha)\theta_{\mathrm{ft}} .
\end{equation}
Since loading an additional independently fine-tuned model is computationally expensive, we approximate the fine-tuned model as a drifted version of the watermarked model:
\begin{equation}
    \theta_{\mathrm{ft}}
    =
    \theta_{\mathrm{wm}} + \epsilon,
    \qquad
    \epsilon \sim \mathcal{N}(0, \sigma^2 I).
\end{equation}
Substituting this into the merging equation gives
\begin{equation}
    \theta_{\mathrm{MM}}
    =
    \theta_{\mathrm{wm}}
    +
    (1-\alpha)\epsilon .
\end{equation}
Therefore, the implemented MM attack is equivalent to Gaussian parameter drift with an effective noise scale of $(1-\alpha)\sigma$. 
We use $\alpha=0.7$ by default.

\paragraph{Parameter-efficient fine-tuning attack (PEFT).}
We implement PEFT using LoRA adapters. 
The LoRA modules are inserted into the query and value projection matrices, i.e., \texttt{q\_proj} and \texttt{v\_proj}. 
The default LoRA configuration is rank $r=8$, scaling factor $\alpha_{\mathrm{LoRA}}=32$, and dropout rate $0.1$. 
We fine-tune only the low-rank adapter parameters on the attack dataset, using causal language modeling loss. 
The default PEFT setting uses learning rate $10^{-4}$, batch size $4$, and $15$ optimization steps. 
After training, the LoRA adapters are merged back into the base model to obtain the final attacked model:
\begin{equation}
    \theta_{\mathrm{PEFT}}
    =
    \theta_{\mathrm{wm}}
    +
    \Delta_{\mathrm{LoRA}} .
\end{equation}

\paragraph{Supervised fine-tuning attack (SFT).}
We also evaluate a stronger full-parameter supervised fine-tuning attack. 
Unlike PEFT, SFT updates all model parameters using causal language modeling loss on the attack dataset. 
To reduce memory usage, we enable gradient checkpointing and use gradient accumulation. 
The default SFT setting uses learning rate $10^{-5}$, per-device batch size $1$, gradient accumulation steps $8$, and $10$ optimization steps. 
We use mixed precision when available and disable checkpoint saving during the attack. 
If the model is wrapped by a LoRA module before SFT, we first merge the adapter into the base model and then perform full-parameter fine-tuning.

\begin{table}[t]
\centering
\caption{Implementation details of robustness attacks.}
\label{tab:attack_implementation_details}
\resizebox{\linewidth}{!}{
\begin{tabular}{l l l l}
\toprule
\textbf{Attack} & \textbf{Type} & \textbf{Main operation} & \textbf{Default setting} \\
\midrule
Noise 
& Parameter perturbation 
& Add Gaussian noise to model parameters 
& $\sigma=0.05$ in robustness evaluation \\

QT 
& Compression 
& Symmetric per-tensor fake quantization 
& 8-bit \\

MM 
& Model merging / drift 
& $\theta_{\mathrm{wm}} + (1-\alpha)\epsilon$ 
& $\alpha=0.7$ \\

PEFT 
& Parameter-efficient tuning 
& LoRA on \texttt{q\_proj}, \texttt{v\_proj} 
& $r=8$, $\alpha_{\mathrm{LoRA}}=32$, dropout $0.1$, lr $10^{-4}$, 15 steps \\

SFT 
& Full-parameter tuning 
& Update all parameters with CLM loss 
& lr $10^{-5}$, batch size $1$, grad. accum. $8$, 10 steps \\
\bottomrule
\end{tabular}
}
\end{table}

\section{Locality of the Watermark Signal}
\label{app:causal_halo}

\paragraph{Motivation.}
BiCoT is designed to localize the watermark signal around structural anchors while preserving the representational capacity of ordinary semantic tokens. In the geometric objective, anchor tokens are encouraged to align with the signature direction, whereas control tokens are regularized to remain orthogonal:
\begin{equation}
\mathcal{L}_{\mathrm{geo}}(h_t,v)
=
M(x_t)\cdot \|\bar{h}_t-v\|_2^2
+
(1-M(x_t))\cdot \langle \bar{h}_t,v\rangle^2,
\label{eq:app_geo_loss}
\end{equation}
where $\bar{h}_t$ denotes the normalized hidden state, $v$ is the secret signature vector, and $M(x_t)=1$ indicates an anchor token. This objective raises a natural question: if later tokens attend to watermarked anchors for reasoning, do they also absorb the geometric signature and violate the orthogonality constraint?

We answer this question by analyzing the token-level cosine similarity between hidden states and the signature direction. Our key observation is that the watermark signal is highly localized. Tokens immediately following an anchor may inherit a weak residual signature due to autoregressive state propagation, but this effect decays rapidly and does not spread to ordinary text tokens. We refer to this short-range residual phenomenon as the \emph{causal halo effect}.

\paragraph{Measurement.}
For each generated reasoning trace, we extract the hidden state $h_t$ at the injection layer and compute its normalized cosine alignment with the signature vector:
\begin{equation}
c_t = \langle \bar{h}_t, v\rangle,
\qquad
\bar{h}_t = \frac{h_t}{\|h_t\|_2}.
\label{eq:app_cosine_alignment}
\end{equation}
A value close to $1$ indicates strong alignment with the watermark signature, while a value close to $0$ indicates approximate orthogonality. We group tokens according to their relative position to the nearest anchor token: pre-anchor text, anchor tokens, immediate post-anchor tokens, second post-anchor tokens, and pure text tokens farther away from anchors. To avoid being dominated by outliers, we report median cosine similarity within each group.

\paragraph{Microscopic token-level trajectory.}
Table~\ref{tab:causal_halo_micro} shows a representative local trajectory from a generated reasoning trace. Anchor digits are intentionally collapsed onto the signature direction, as expected. The token immediately following the anchor span can temporarily inherit a high cosine value. However, the signal quickly dissipates as the sequence returns to ordinary text.

\begin{table}[t]
\centering
\small
\caption{
Microscopic example of the causal halo effect. Anchor digits are strongly aligned with the signature direction. The immediate post-anchor text token may inherit a transient residual signal, but subsequent pure text tokens quickly return to near-orthogonality.
}
\label{tab:causal_halo_micro}
\begin{tabular}{c l l c l}
\toprule
Position & Token & Category & $\cos(\bar{h}_t,v)$ & Observation \\
\midrule
15 & \texttt{space} & Pure Text & $-0.030$ & Orthogonal \\
16 & \texttt{1} & Anchor & $+0.988$ & Collapsed to signature \\
17 & \texttt{2} & Anchor & $+0.988$ & Collapsed to signature \\
18 & \texttt{and} & Post-anchor Text & $+0.930$ & Causal halo \\
20 & \texttt{space} & Pure Text & $+0.091$ & Signal dissipates \\
21 & \texttt{3} & Anchor & $+0.984$ & Collapsed to signature \\
23 & \texttt{:}\textbackslash\texttt{n} & Post-anchor Text & $+0.859$ & Causal halo \\
\bottomrule
\end{tabular}
\end{table}

This local behavior is a consequence of autoregressive generation. Since each token state is conditioned on its preceding context through causal attention and residual streams, the hidden state immediately after an aligned anchor may still carry part of the anchor's geometric direction. Importantly, this does not imply that the watermark is globally distributed over ordinary text. It only reflects a short-range physical propagation effect around the reasoning backbone.

\paragraph{Macroscopic statistical decay.}
To verify that the above example is not an isolated case, we conduct a corpus-level analysis over $25{,}183$ generated tokens. We group tokens by their relative distance from anchor tokens and report the median cosine similarity in each group. The results are shown in Table~\ref{tab:causal_halo_macro}.

\begin{table}[t]
\centering
\small
\caption{
Macroscopic decay of watermark alignment by relative token position. The signature is concentrated on anchors and immediate post-anchor tokens, but rapidly decays for ordinary text tokens.
}
\label{tab:causal_halo_macro}
\begin{tabular}{l c l}
\toprule
Relative position & Median $\cos(\bar{h}_t,v)$ & Interpretation \\
\midrule
Pre-anchor text & $+0.022$ & Near-orthogonal \\
Anchor tokens & $+0.980$ & Strong alignment \\
Immediate next token & $+0.961$ & Causal halo \\
Second token after anchor & $+0.110$ & Rapid decay \\
Distance $\geq 3$ from anchor & $-0.004$ & Near-orthogonal \\
\bottomrule
\end{tabular}
\end{table}

The global trend confirms that the watermark signal does not diffuse throughout the entire CoT. While anchor tokens and their immediate successors exhibit high alignment, the median cosine similarity drops sharply by the second token after an anchor and becomes nearly zero for tokens at distance three or farther. This supports the intended separation between watermark-bearing structural anchors and ordinary semantic tokens.

\paragraph{Semantic attention versus geometric alignment.}
A high attention weight from a non-anchor token to an anchor token does not necessarily mean that the non-anchor token adopts the anchor's geometric signature. Attention is a semantic routing mechanism: ordinary text tokens may attend to numerical anchors because they are logically relevant to the reasoning step. Geometric alignment, in contrast, is measured by the cosine similarity between the token hidden state and the secret signature direction. The causal halo analysis shows that these two phenomena are separable. Non-anchor tokens can use anchor information for reasoning while remaining geometrically orthogonal to the watermark direction after the short local residual dissipates.

\paragraph{Implications for BiCoT.}
The causal halo effect provides a more precise interpretation of BiCoT's locality. The watermark is not injected uniformly into all generated tokens. Instead, it is concentrated around high-saliency anchor states, with a small and rapidly decaying residual on immediately adjacent tokens. This explains why BiCoT can maintain strong detectability while preserving generation quality: the watermark is bound to the reasoning backbone rather than being spread across general linguistic content.

In practice, this observation also suggests that orthogonality diagnostics should distinguish between \emph{post-anchor halo tokens} and \emph{pure control tokens}. When reporting control-token alignment, we therefore recommend either excluding tokens within a short window after anchors or reporting them as a separate post-anchor category. This avoids conflating a local autoregressive residual with a global failure of geometric separation.

\section{Additional Baseline Discussion}
\label{app:baselines}
Existing model watermarking techniques for Large Language Models (LLMs) can be broadly categorized into white-box and black-box paradigms, distinguished primarily by their verification mechanisms.

\subsection{White-box Watermarking}
White-box watermarking operates under the assumption that the verifier has full access to the model's parameters. These approaches typically embed ownership signatures directly into the model weights~\citep{liang2026watermarking}. Verification is conducted by inspecting the internal parameters to extract the embedded signal. While these methods allow for robust embedding, the requirement for white-box access limits their utility in scenarios where models are deployed as black-box APIs or encapsulated services.

\subsection{Black-box Watermarking}
In contrast, black-box watermarking verifies ownership solely through API queries, without requiring access to the model's internal weights~\citep{yamabe2025mergeprint,xiong2025iseal}. Current literature in this domain is dominated by three primary strategies: backdoor-based methods, model editing-based methods, and fingerprinting approaches.

\paragraph{Backdoor-based Approaches.}
This stream of research focuses on injecting specific trigger-action patterns into the model~\citep{shao2024explanation,li2023turning}. The core idea is to force the model to output a specific watermark signal when a predefined trigger appears in the input. For instance, Instructional Fingerprinting (IF) methods~\citep{xu2024instructional} implant a private key as an instruction backdoor, causing the model to generate specific text when the key is present. This can be achieved through full-parameter supervision (IF-sft) or lightweight embedding updates (IF-emb). However, forcing the model to overfit to a small set of fixed triggers introduces significant drawbacks. It fundamentally compromises \textit{fidelity} on complex tasks and degrades \textit{generalization} due to interference from the backdoor mechanism~\citep{gu2017badnets,hu2024weaknesses,puah2024blockdoor}. Additionally, the reliance on fixed patterns renders these watermarks susceptible to detection and filtering, and verification becomes unreliable if triggers are leaked or spoofed~\citep{wang2019neural,liu2019abs}.

\paragraph{Model Editing-based Approaches.}
To mitigate the inefficiencies of full fine-tuning used in some backdoor methods, model editing-based approaches leverage knowledge updating algorithms to inject watermarks as specific facts~\citep{li2025editmark,yue2025pree}. Techniques such as \textit{met}~\citep{gao2024model} and SEAL~\citep{dai2025seal} fall into this category. Specifically, SEAL proposes a subspace-anchored watermarking framework that utilizes model editing to align the latent representations of anchor samples with orthogonal bit vectors. By integrating the watermark into the model's knowledge space rather than creating behavioral outliers, these methods improve stealthiness and efficiency. Nevertheless, this paradigm often treats the watermark as static knowledge to be memorized, which can be fragile against distribution shifts or adaptive attacks on prompt structures~\citep{cheng2023can,wang2024knowledge}.

\paragraph{Fingerprinting and Active Verification.}
Distinct from injection-based watermarking, fingerprinting methods focus on identifying unique model characteristics or injecting identifiable features for ownership verification. \textit{llmmap}~\citep{pasquini2025llmmap} employs an active fingerprinting approach, sending carefully crafted queries to an API and analyzing the response distribution to identify the specific model version without modifying its weights. Conversely, iSeal~\citep{xiong2025iseal} introduces an encrypted fingerprinting mechanism designed for scenarios where the adversary controls the inference process. It injects unique features into the model and an external module, utilizing error correction to ensure reliable verification even under unlearning or response manipulation attacks.

Notably, the IF does not perform well in our main experiment, which is just the same as the results in SEAL~\cite{dai2025seal}.
\end{document}